\newcommand{\PreserveBackslash}[1]{\let\temp=\\#1\let\\=\temp}
\newcolumntype{C}[1]{>{\PreserveBackslash\centering}p{#1}}
\definecolor{input}{HTML}{303060}
\definecolor{output}{HTML}{804000}
\definecolor{string}{HTML}{A02020}
\definecolor{parent}{HTML}{A020A0}
\definecolor{function}{HTML}{205080} 
\definecolor{constructor}{HTML}{205080}
\definecolor{method}{HTML}{205080}
\definecolor{keyword}{HTML}{008000}
\definecolor{error}{HTML}{B01010}
\definecolor{comment}{HTML}{60A060}
\newcommand{\defemph}[1]{\emph{#1}}
\newcommand{\noopsort}[1]{}
\DeclareMathOperator{\GL}{GL}
\DeclareMathOperator{\val}{val}
\newcommand{\ZZ}{\mathbb Z}
\newcommand{\R}{\mathbb R}
\newcommand{\RR}{\mathbb R}
\renewcommand{\mod}{\;\mathrm{mod}\;}
\newcommand{\ifnonempty}[3]{%
  \def\tempa{}%
  \def\tempb{#1}%
  \ifx\tempa\tempb 
  #3 
  \else            
  #2
  \fi}
\newcommand{\LT}{LT}
\newcommand{\LM}{LM}
\newcommand{\sage}{\textsc{SageMath}\xspace}
\definecolor{purple}{rgb}{0.6,0,0.6}
\definecolor{answer}{rgb}{0,0.5,0.2}
\newtheorem{theo}{Theorem}[section]
\newtheorem{lem}[theo]{Lemma}
\newtheorem{prop}[theo]{Proposition}
\theoremstyle{definition}
\newtheorem{rem}[theo]{Remark}
\newtheorem{ex}[theo]{Example}
\newtheorem{deftn}[theo]{Definition}
\begin{document}

\title{On FGLM Algorithms with Tropical Gröbner bases}

\author{Yuki Ishihara}
  \affiliation{Graduate School of Science,
  \institution{Rikkyo University}
  \city{Tokyo, Japan}}
  \email{yishihara@rikkyo.ac.jp}

\author{Tristan Vaccon}
  \affiliation{Universit\'e de Limoges;
  \institution{CNRS, XLIM UMR 7252}
  \city{Limoges, France}  
  \postcode{87060}  
  }
  \email{tristan.vaccon@unilim.fr}

\author{Kazuhiro Yokoyama}
  \affiliation{Departement of Mathematics,
  \institution{Rikkyo University}
  \city{Tokyo, Japan}}
  \email{kazuhiro@rikkyo.ac.jp}

\begin{abstract}

Let $K$ be a field equipped with a valuation. Tropical varieties over $K$ can be defined with a theory of Gröbner bases taking into account the valuation of $K$.
Because of the use of the valuation, the theory of tropical Gröbner bases 
has proved to provide settings for 
computations over polynomial rings over a $p$-adic field
that are more stable than that of classical Gröbner bases.
In this article, we investigate how the FGLM change of ordering
algorithm can be adapted to the tropical setting.

As the valuations of the polynomial coefficients are taken into
account, the classical FGLM algorithm's incremental way, monomomial by monomial,
to compute the multiplication matrices and the change
of basis matrix can not be transposed at all to the tropical
setting. We mitigate this issue by developing
new linear algebra algorithms and apply them
to our new tropical FGLM algorithms.

Motivations are twofold.
Firstly, to compute tropical varieties, one usually goes
through the computation of many tropical Gröbner
bases defined for varying weights (and then varying term orders).
For an ideal of dimension $0$, the tropical FGLM algorithm provides an efficient way
to go from a tropical Gröbner basis from one weight to one for another weight.
Secondly, the FGLM strategy can be applied to go from a tropical
Gröbner basis to a classical Gröbner basis.
We provide tools to chain the stable computation of a tropical Gröbner
basis (for weight $[0,\dots,0]$) with the $p$-adic stabilized
variants of FGLM of \cite{RV:2016} to compute a lexicographical or shape position basis.

All our algorithms have been implemented into
\sage. We provide numerical examples to illustrate
time-complexity. We then illustrate
the superiority of our strategy regarding to the 
stability of $p$-adic numerical computations.
\end{abstract}

 \begin{CCSXML}
<ccs2012>
<concept>
<concept_id>10010147.10010148.10010149.10010150</concept_id>
<concept_desc>Computing methodologies~Algebraic algorithms</concept_desc>
<concept_significance>500</concept_significance>
</concept>
</ccs2012>
\end{CCSXML}

\ccsdesc[500]{Computing methodologies~Algebraic algorithms}


\vspace{-1.5mm}
\terms{Algorithms, Theory}

\keywords{Algorithms, Tropical Geometry, Gröbner bases, FGLM 
algorithm, $p$-adic precision}

\setlength{\textfloatsep}{0pt}

\maketitle
\vspace{-1cm}
\section{Introduction}

The development of tropical geometry is now more than three decades old.
It has generated significant applications to very various domains, from algebraic geometry to combinatorics, computer science, economics, optimisation, non-archimedean geometry and many more. We refer to \cite{MS:2015} for
a complete introduction.

Effective computation of tropical varieties are now available
using Gfan and Singular (see \cite{JRS:2019} , \cite{GRZ:2019}).
Those computations often rely on the computation of so-called tropical Gröbner bases (we use \textit{GB} for Gröbner bases in the following).
Since Chan and Maclagan's definition of tropical Gröbner bases taking into account the valuation in \cite{CM:2019}, computations of 
tropical GB are available over fields with trivial or non-trivial valuation, using various methods: Matrix F5 in \cite{Vaccon:2015},
F5 in \cite{Vaccon:2017,Vaccon:2018} or lifting in \cite{MR:2019}.

An important motivation for studying the computation of
tropical GB is their numerical stability.
It has been proved in \cite{Vaccon:2015} that for polynomial 
ideals over a $p$-adic field, computing tropical 
GB (which by definition take into account the valuation), can be significantly
more stable than classical GB.

Unfortunately, no tropical term ordering can be an
elimination order, hence tropical GB can not be used directly for solving polynomial
systems.
Our work is then motivated by the following question: 
can we take advantage
of the numerical stability of the computation of tropical GB to compute a shape position
basis in dimension zero through a change of ordering algorithm?

In this article, we tackle this problem by studying
the main change of ordering algorithm, FGLM \cite{Faugere:1993}.
On the way, we investigate some adaptations and optimizations
of this algorithm designed to take advantage of some special
properties of the ideal (\textit{e.g.} Borel-fixedness of its initial ideal). 

We also provide a way to go from a tropical term order to another. 
This produces another motivation: difficulty of
  computation can vary significantly depending on
  the term order (see \S 8.1 of \cite{Vaccon:2018}),
  hence, 
  using a tropical FGLM algorithm, one could go
  from an easy term order to a harder one
  in an efficient way.

Finally, we conclude with numerical data to estimate the loss
 in precision for the computation of a lex Gröbner basis
  using a tropical F5 algorithm followed by an FGLM
   algorithm, in an affine setting, and also numerical data
    to illustrate the
behavior of the various variants of FGLM handled along the way.

\subsection{Related works}


Chan and Maclagan have developed in \cite{CM:2019} a Buchberger algorithm to compute tropical GB for homogeneous input polynomials (using a special division algorithm).
Following their work, adaptations of the F5 strategies
have been developped in  \cite{Vaccon:2015, Vaccon:2017, Vaccon:2018}
culminating with complete F5 algorithms for affine input polynomials.

A completely different approach has been developped by Markwig and Ren in \cite{MR:2019}, relating the computation of tropical GB
in $K[X_1,\dots, X_n]$ to the computation of standard basis in
$R\llbracket t \rrbracket [X_1,\dots, X_n]$ (for $R$ a subring of the ring of
integers of $K$).
It can be connected to the Gfanlib interface in Singular to compute 
tropical varieties (see: \cite{JRS:2019}).

Finally, Görlach, Ren and Zhang have developped in \cite{GRZ:2019} a way to compute
zero-dimensional tropical varieties using shape position bases
and projections. Their algorithms take as input
a lex Gröbner basis in shape position.
Our strategies can be used to provide such a basis stably (precision-wise)
when working with $p$-adic numbers, 
and be chained with their algorithms.

\subsection{Notations}

Let $K$ be a field with a discrete valuation $\val$ such that
$K$ is complete with respect to the norm defined by $\val$. We denote by $R=O_K$
its ring of integers, $m_K$ its maximal ideal (with $\pi$ a uniformizer), and $k=O_K/m_K$ its fraction
field. We
refer to Serre's Local Fields \cite{Serre:1979} for an introduction to such
fields. 
Classical
examples of such fields are $K = \mathbb{Q}_p$, with $p$-adic valuation, and
$\mathbb{Q}((X))$ or $\mathbb{F}_q((X))$ with $X$-adic valuation. 


The polynomial ring $K[X_1,\dots, X_n]$ (for some $n \in \mathbb{Z}_{> 0}$) will be denoted by $A$, and
for $u=$ $(u_1,\dots$ $,u_n)$ $\in \mathbb{Z}_{\geq 0}^n$, we write $x^u$ for
$X_1^{u_1} \dots X_n^{u_n}.$
For $g \in A,$ $\vert g \vert$ denotes the total degree of $g$
and $A_{\le d}$ the set of all polynomials in $A$ of total degree less than $d$. 
The matrix of a finite list of polynomials (of total degree $\leq d$ for some $d$) written in a basis of monomials (of total degree $\leq d$)
is called a \defemph{Macaulay matrix}.

For $w \in Im(\val)^n \subset \mathbb{R}^n$ and $\leq_m$ a monomial order on $A,$
we define $\leq$ a tropical term order as in the following definition:

\begin{deftn} \label{defn:trop_term_order}
Given $a,b \in K^*=K \setminus \{ 0 \}$ and $x^\alpha$ and $x^\beta$ two monomials in $A$, 
we write $a x^\alpha < b x^\beta$ if:
\begin{itemize}
\item $\vert x^\alpha \vert < \vert x^\beta \vert,$ or
\item $\vert x^\alpha \vert = \vert x^\beta \vert,$
and
$\val(a)+w \cdot \alpha > \val(b) +w \cdot \beta$, or
\item $\vert x^\alpha \vert = \vert x^\beta \vert,$
$\val(a)+w \cdot \alpha = \val(b) +w \cdot \beta$ 
and $x^\alpha <_m x^\beta.$
\end{itemize} 
For $u$ of valuation $0,$ we write $a x^\alpha =_{\leq} u a x^\alpha.$
Accordingly, $a x^\alpha \leq b x^\beta$
if $a x^\alpha < b x^\beta$ or $a x^\alpha =_{\leq} b x^\beta.$
\end{deftn}

Leading terms ($\LT$) and leading monomials ($\LM$) are defined according to this
term order. See Subsec. 2.3 of \cite{Vaccon:2018} for more information
on this definition and its comparison with Def. 2.3 of \cite{CM:2019}.

Let $I \subset A$ be a $0$-\textit{dimensional}.
Let $B_{\leq }$ the canonical linear $K$-basis of $A/I$ made of
the $x^\alpha \notin \LM_{\leq}(I)$. Let $\delta$ be the cardinality of $B_{\leq }.$ We denote by $\mathscr{B}_{\leq}$ the border
of $B_{\leq }$ (\textit{i.e.} the $x_k x^\alpha$ for $k \in \llbracket 1,n \rrbracket$ such that 
$x^\alpha \in B_{\leq }$ and $x_k x^\alpha$ not in $B_{\leq }$).
$NF_{\leq}$ is the normal form mapping defined by $I$ and $\leq.$ 
We define $D$ such that $D=1+\max_{x^\alpha \in B_\leq} \vert x^\alpha \vert $.

\section{Multiplication matrices} \label{section:mult_mat}

The first task in the FGLM strategy is to develop
the tools for computations in $A/I.$
The main ingredients are the multiplication matrices,
$M_1,\dots, M_n$, corresponding to 
the matrices of the linear maps given by the multiplication
by $x_i$ written in the basis $B_{\leq }$.

Once they are known, it is clear that one can perform
any $K$-algebra operation on elements of $A/I$
written in the basis $B_{\leq }$.

To compute those matrices, a natural strategy is to
go through the computation of the normal forms
$NF(x_i x^\alpha)$ for $x^\alpha \in B_\leq.$

We investigate in this section how to
proceed with this task, and how
it compares to the classical case.

\subsection{Linear algebra}

We recall here the tropical row-echelon form algorithm
of \cite{Vaccon:2015}
that we use for computing normal forms using linear algebra.

\begin{algorithm} 

 \SetKwInOut{Input}{input}\SetKwInOut{Output}{output}

 \Input{$M$, a Macaulay matrix of degree $d$ in $A$, with $n_{row}$ rows and $n_{col}$ columns, and $mon$ a list of monomials indexing the columns of $M.$}
 \Output{$\widetilde{M}$, the $U$ of the tropical LUP-form of $M$}

$\widetilde{M} \leftarrow M$   \;
		
\For{$i=1$ to $n_{row}$}{
\textbf{Find} $j$ such that $\widetilde{M}[i,j]$ has the greatest term $\widetilde{M}[i,j] x^{mon_j}$ for $\leq$ of the row $i$ \;
\textbf{Swap} the columns $i$ and $j$ of $\widetilde{M}$, and the $i$ and $j$ entries of $mon$ \;
By \textbf{pivoting} with the $i$-th row, eliminates the coefficients of the other rows on the first column; \;
}
\textbf{Return} $\widetilde{M}$ \;

 \caption{The tropical row-echelon form algorithm} \label{algo:trop_LUP}
\end{algorithm}

We refer the interested reader to \cite{Vaccon:2015, Vaccon:2018}.
We illustrate this algorithm with the following example.

\begin{ex}
We present the following Macaulay matrices, over
$\mathbb{Q}_3[x,y]$ with $w=(0,0),$
and $\leq_m$ be the graded lexicographical
ordering.
The second one is the output of the tropical LUP algorithm applied on the first one.
The monomials indexing the columns are written on top of the matrix.

\begin{tikzpicture}
\matrix (m) [matrix of math nodes,nodes in empty cells, ampersand replacement=\& ] at (0,0) {  x^4 \& x^3 y \& y^4 \& x^2  \& xy \& y^2 \\ 
 1  \& \& \& \& 3 \& \phantom{2} \\
  \& \& \& 1 \& 9 \& 3 \\
   \& 9 \& 9\&\&\&   \\
  \phantom{1}  \& 9 \& 9 \& 3 \& 1 \& 9 \\ 
} ;
\draw (m-2-1.north west) to[ right] (m-5-1.south west);
\draw (m-2-6.north east) to[ left] (m-5-6.south east);

\matrix (n) [matrix of math nodes,nodes in empty cells, ampersand replacement=\& ] at (4,0) {  x^4 \& x^2 \& x^3 y \& xy  \& y^4 \& y^2 \\ 
1  \& \& \& \& 3 \& \phantom{-18} \\
  \& 1 \& \& 0 \&  \&-\frac{57}{35} \\
   \&  \& 9\& 0 \& 9 \& -\frac{162}{35}  \\
   \phantom{1}  \&  \&  \& -35 \& 0 \& -18 \\ 
} ;
\draw (n-2-1.north west) to[ right] (n-5-1.south west);
\draw (n-2-6.north east) to[ left] (n-5-6.south east);
\end{tikzpicture}. 

If all four polynomials represented by the matrix belong to some ideal $I$
(and assuming that $y^4, y^2 \in B_\leq (I)$)
then we can conclude that $NF_\leq ( xy)=- \frac{18}{35} y^2$
and $NF_\leq ( x^3y)=-y^4+\frac{18}{35}y^2.$
\end{ex}

\subsection{Comparison with classical case}

The classical strategy to compute the $NF_{\leq_m}(x_i x^\alpha)$ ($x^\alpha \in B_{\leq_m}$) when working with a monomial ordering $\leq_m$, starting with a reduced GB $G$, is to set apart the following only three cases possible:
\begin{description*} \item[(Type 1)] $x_i x^\alpha \in B_{\leq_m}$; \item[(Type 2)] $x_i x^\alpha \in \LT (G)$; \\
\item[(Type 3)] $x_i x^\alpha \in \LT_{\leq_m} (I)$ but neither in $ B_{\leq_m}$ nor in $\LT (G)$.  \end{description*}

Type 1 is the easiest, as in this case $ NF_{\leq_m}(x_i x^\alpha) = x_i x^\alpha$.
Type 2 is not very difficult either.
If for some $g \in G,$ $\LM(g) =  x_i x^\alpha$, $g=x_i x^\alpha+\sum_{x^\beta \in B_{\leq_m}} c_\beta x^\beta$,  then as $G$ is reduced, we get directly that
$NF_{\leq_m}(x_i x^\alpha)=-\sum_{x^\beta \in B_{\leq_m}} c_\beta x^\beta$.

Type 3 is the trickiest. We assume that we have already computed all the $NF(x_j x^\beta)$
for $x_j x^\beta <_m x_i x^\alpha.$
Let $x_k$ be the smallest (for $\leq_m$) variable dividing $x_i x^\alpha$.
Then the normal form \[NF \left(\frac{x_i x^\alpha}{x_k} \right)=\sum_{x^\beta \in B_{\leq_m}, \: x^\beta <_m \frac{x_i x^\alpha}{x_k}} c_\beta x^\beta\] is already known.
As in the previous sum, $x^\beta <_m \frac{x_i x^\alpha}{x_k}$, then $x_k x^\beta <_m x_i x^\alpha$, and all the $NF(x_k x^\beta)$'s
are also already known. 
Therefore, we can write
\[NF(x_i x^\alpha)=\sum_{x^\beta \in B_{\leq_m}, \: x^\beta <_m \frac{x_i x^\alpha}{x_k}} c_\beta NF(x_k x^\beta),\] and $NF(x_i x^\alpha)$ can be obtained from the previous normal forms.

It is easy to see that the cost of computation of a normal form in the third case
is in $O(\delta^2)$ field operations. The other two cases are negligible.
As there are $O(n \delta)$ multiples to consider,
the total cost for the computation of the multiplication matrices is in
$O(n\delta^3)$ field operations.

Unfortunately, this strategy can not be completely generalized to the tropical context.
There is no issue with the first two computations.
However, there is no straightforward way to adapt the third one. 
We illustrate this failure with the following example.

\begin{ex}
Over $\mathbb{Q}_3[x,y]$ with $\leq$ defined by $w=(0,0),$
and $\leq_m$, the graded lexicographical
ordering, let us take $I=\left\langle f_1,f_2,f_3,f_4 \right\rangle$
with $f_1=x^7,$ $f_2=x^4y^2+3x^5y+12x^3y^3+9xy^5,$ $f_3=x^2 y^4+9x^5 y+18 x^3 y^3+9 x y^5,$
$f_4=y^6+12x^5y+3x^3y^3+6xy^5.$
The first monomials of the third type arrive in degree $7,$
namely $xy^6, x^2 y^5, x^4 y^3, x^5 y^2$.
Due to the fact that we use a tropical term order, 
$f_2,$ $f_3,$ and $f_4$ all involve the monomials $x^5 y, x^3 y^3, x y^5.$
In consequence if one wants to use multiples of the $NF(x^4y^2),$ $NF(x^2 y^4),$ $NF(y^6),$
one gets quantity involving each three monomials among  $xy^6, x^2 y^5, x^4 y^3,$ and $x^5 y^2$.
They are all intertwined, and the trick we saw previously for
monomials of the third type can not be used.
\end{ex}

\subsection{Tropical GB: General case}

To untangle the reduction of monomials of the third type, we can use linear algebra.
We have to proceed degree by degree.
While monomials of the first type do not need any special proceeding,
we need to interreduce the reductions of the monomials of the second
and third types.
The general strategy is described in Algorithm \ref{algo:mult_mat}.

\begin{algorithm} 
  \SetKwInOut{Input}{input}\SetKwInOut{Output}{output}
  \Input{A reduced GB $G$ of the ideal $I$ for $\leq$, a tropical term ordering.}
  \Output{$M_1,\dots,M_n$ the multiplication matrices of $A/I$ (over the basis $B_\leq$).}

  Using $LT(G)$, compute $B_\leq$ (and $\delta = \sharp(B_\leq)$)\;
  Define $M_1,\dots,M_n$ as zero matrices in $K^{\delta \times \delta}$, their rows and columns
  are indexed by the $x^\alpha \in B_\leq$  \;
  Compute $L=\{x_i x^\alpha, \textrm{ for } i \in \llbracket 1,n \rrbracket \textrm{ and } x^\alpha \in B_\leq \}.$ \;
  Compute $\overline{L}=L \cap (B_\leq \cup LT(G))^c$ \;
  \For{$x^\alpha \in L \cap B_\leq$}
  		{\For{$i$ such that $x_i$ divides $x^\alpha$}
  				{Set $M_i[x^\alpha, \frac{x^\alpha}{x_i}]=1$ \;
  				\tcc{The column indexed by $\frac{x^\alpha}{x_i}$ is zero, except on its coefficient indexed by $x^\alpha/x_i$ }}	  
  		}
  \For{$x^\alpha \in L \cap LT(G)$}
  		{	Take $g \in G$ such that $g$ can be written $g=x^\alpha +\sum_{x^\beta \in B_\leq} g_{x^\beta} x^\beta$ \;
  			\For{$i$ such that $x_i$ divides $x^\alpha$}
  				{
				 \For{$x^\beta \in B_\leq$}
				 		{Set $M_i[x^\beta, \frac{x^\alpha}{x_i}]=-g_{x^\beta}$ \; }  				
  				}	  
  		}
  		
  Set $\mathscr{M}$ to be a matrix over $K$ with $0$ rows and with columns indexed by $\overline{L} \cup LT(G) \cup B_\leq.$ \;
  
  \For{$d$ a degree of a monomial in $\overline{L}$ (in ascending order)}
  {
  \For{$x^\alpha \in \overline{L}$ of degree $d$}
	  {
		Find $x_i$, and $g$ either in $G$ or as a row of $\mathscr{M}$ such that $LT(x_i g)=x^\alpha$ \;
		Stack $x_i g$ at the bottom of $\mathscr{M}$ \; 	  
	  }
  Using multiples of the form $x_i g$ or $g$, for $g$ either in $G$ or as a row of $\mathscr{M}$, find a complete set of reducers for all the monomials in $\overline{L} \cup LT(G)$ appearing with a non-zero coefficient in their column, and stack them  at the bottom of $\mathscr{M}$ \;
  
  Compute the Tropical Row-echelon form of $\mathscr{M}$ by Algorithm \ref{algo:trop_LUP} and replace $\mathscr{M}$ with it \;
  
  \For{$x^\alpha \in \overline{L}$}
  		{	Take the row $s$ of $\mathscr{M}$ with leading coefficient $x^\alpha.$ \;
  			\For{$i$ such that $x_i$ divides $x^\alpha$}
  				{
				 \For{$x^\beta \in B_\leq$}
				 		{Set $M_i[x^\beta, \frac{x^\alpha}{x_i}]=-\frac{\mathscr{M}[s,x^\beta]}{\mathscr{M}[s,x^\alpha]}$ \; }  				
  				}	  
  		}  
  }
  \textbf{Return $M_1,\dots,M_n$}
  
  \caption{Multiplication matrices computation algorithm} \label{algo:mult_mat}
\end{algorithm}

\begin{prop}
Algorithm \ref{algo:mult_mat} is correct, and is in $O(n^3 \delta^3)$ field operations over $K$.
\end{prop}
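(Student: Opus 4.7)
The plan is to partition the multiples $x_i x^\beta$ (for $x^\beta \in B_\leq$, $i \in \llbracket 1,n \rrbracket$) into the three types from the preceding discussion and analyse each separately. Types~1 and~2 are immediate: when $x_i x^\beta \in B_\leq$ the normal form is $x_i x^\beta$ itself and the corresponding column of $M_i$ is a standard basis vector, which matches the first \textbf{For} loop; when $x_i x^\beta = x^\alpha \in LT(G)$, reducedness of $G$ gives a unique $g = x^\alpha + \sum_{x^\gamma \in B_\leq} g_{x^\gamma} x^\gamma \in G$, so $NF(x^\alpha) = -\sum g_{x^\gamma} x^\gamma$, matching the second \textbf{For} loop exactly.

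For Type~3 monomials, I would proceed by induction on the degree $d$ (taken in the ascending order used by the outer loop). The inductive claim is that, after the iteration of degree $d$ completes, for every $x^\alpha \in \overline{L}$ with $|x^\alpha| \leq d$ the matrix $\mathscr{M}$ contains a row $s$ with leading monomial $x^\alpha$ and all remaining nonzero entries supported on columns indexed by $B_\leq$; since $s$ represents an element of $I$, its $B_\leq$-tail equals $-\mathscr{M}[s, x^\alpha]\cdot NF(x^\alpha)$. The existence of an initial stacked multiple $x_i g$ with $LT(x_i g) = x^\alpha$ follows from $x^\alpha \in LT(I)$: either the divisor $x^\alpha / x_i$ already lies in $LT(G)$ for a suitable $i$, or, failing that, it is the leading monomial of a row of $\mathscr{M}$ produced in a strictly smaller degree (by the inductive hypothesis). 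The main obstacle is proving that the inner loop which stacks a complete set of reducers for every Type~2 or Type~3 monomial appearing with nonzero coefficient truly suffices to clean the tails of the pivot rows, so that after Algorithm~\ref{algo:trop_LUP} no column outside $B_\leq$ survives in a non-pivot position; this follows because every monomial in $\overline{L} \cup LT(G)$ appearing with nonzero coefficient receives its own pivot row, and tropical row-echelon eliminates such monomials from every other row.

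For the complexity, I would bound the number of columns of $\mathscr{M}$ by $|B_\leq| + |\overline{L}| + |LT(G)| = O(n\delta)$, using that both $\overline{L}$ and $LT(G)$ sit inside the border $\mathscr{B}_\leq$, which contains at most $n\delta$ monomials. Since the row-echelon produces at most one pivot per monomial in $\overline{L} \cup LT(G)$, the matrix $\mathscr{M}$ also carries $O(n\delta)$ rows throughout the execution. A single call to Algorithm~\ref{algo:trop_LUP} on such a matrix costs $O((n\delta)^3) = O(n^3\delta^3)$ field operations, and the bookkeeping work to extract coefficients into $M_1,\dots,M_n$ is dominated by this cost. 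The subtle point is that the outer loop invokes the row-echelon once per degree rather than once overall, so the stated bound requires either an amortised analysis, exploiting that rows from previous degrees remain in reduced form when a new degree starts so that reducing each newly stacked row costs only $O((n\delta)^2)$ for a total of $O(n\delta \cdot (n\delta)^2) = O(n^3\delta^3)$, or equivalently a restructuring to a single global row-echelon of the Macaulay-style matrix obtained by stacking all relevant multiples across degrees.
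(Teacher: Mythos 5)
Your proposal follows essentially the same route as the paper's proof: Types~1 and~2 are handled directly, Type~3 is handled by induction on degree over the rows of $\mathscr{M}$, and the complexity bound comes from counting the $O(n\delta)$ rows and columns of $\mathscr{M}$ indexed by monomials in $L \cup B_\leq$. One step you pass over too quickly is \emph{why} a Type~3 monomial $x^\alpha \in \overline{L}$ must have a divisor $x^\alpha/x_i$ of Type~2 or Type~3: your stated dichotomy (either some divisor is in $LT(G)$, or some divisor is a pivot row of $\mathscr{M}$ from a lower degree) silently excludes the case where every divisor lies in $B_\leq$. The paper closes this with a minimal-generator argument: if all divisors of $x^\alpha$ were in $B_\leq$, then $x^\alpha$ would be a minimal generator of $LT(I)$, hence of Type~2, contradicting $x^\alpha \in \overline{L}$; you should supply this. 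On the other hand, your observation that the per-degree calls to the tropical row-echelon form must be amortized --- using that rows surviving from earlier degrees stay reduced, so each newly stacked row costs only $O((n\delta)^2)$ to absorb, for a global $O(n\delta \cdot (n\delta)^2) = O(n^3\delta^3)$ --- is more careful than the paper, which just asserts the cubic bound from the size of $\mathscr{M}$ and leaves the repeated-invocation issue implicit.
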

\begin{proof}
The essentially different part compared to the classical
case starts on Line 13. Lines 16 and 18 are crucial.
By definition, monomials of the third type are in $\LT(I).$
If $x^\alpha \in \overline{L}$ can not be written as $x_k x^\beta$ with 
$x^\beta$ of type 2 or 3, it means that all its divisors are in $B_{\leq}$.
Consequently, it is a minimal generator ot $\LT(I)$ and is of type 2, which is a contradiction.
Therefore, any monomial of the third type is a simple multiple of a monomial
of type 2 or 3.

As in the \textbf{for loop} on Line 14, we proceed by increasing degree,
it is an easy induction to prove that such desired $x_i$ and $g$ exist.

For the complete set of reducers on Line 18, we use the fact that the monomials 
appearing in $\mathscr{M}$ all are in $B_\leq \cup L,$ again by an easy induction (using the fact
that the rows of $\mathscr{M}$ in previous degree are already reduced), and
therefore, the complete set of reducers can be built.

The Tropical Row-echelon form computation then produces the desired 
normal forms. The correctness is then clear.

Regarding to the arithmetic complexity, we should
note that both rows and columns of $\mathscr{M}$ are indexed by monomials in
$L \cup B_\leq$ and there are $O(n \delta)$ of them.
With the row-reduction, the total cost is then in $O(n^3 \delta^3)$ arithmetic operations.
\end{proof}

\begin{rem}
The matrix $\mathscr{M}$ is sparse: any row added to the matrix on Line 17 has at most $\delta+1$ non-zero coefficients: it is obtained as the multiple of a reduced row. Can we take advantage of this $\frac{1}{n}$ sparsity ratio for a better complexity?
\end{rem}

\begin{ex}
Let $G=(y+2x,x^2+4)$ be a GB for $w=[0,0]$ and grevlex of the ideal it spans in
$\mathbb{Q}_2[x,y].$ Then $B_\leq = \{1,x \},$ $L=\{x,y,x^2,xy \}$ and
$\overline{L}=\{xy \}.$ Only $d=2$ is considered on Line 4 of Algorithm \ref{algo:mult_mat}.
The following matrices represent respectively $\mathscr{M}$ before and after applying Algorithm \ref{algo:trop_LUP}, $M_1$ and $M_2$:

\begin{tikzpicture}
\matrix (m) [matrix of math nodes,nodes in empty cells, ampersand replacement=\& ] at (0,0) {  x^2 \& x y \& 1 \\ 
 2  \& 1\&0  \\
 1 \&0 \& 4 \\
} ;
\draw (m-2-1.north west) to[ right] (m-3-1.south west);
\draw (m-2-3.north east) to[ left] (m-3-3.south east);

\matrix (n) [matrix of math nodes,nodes in empty cells, ampersand replacement=\& ] at (2,0) {  x y  \& x^2 \& 1 \\ 
 1  \&0 \& -8  \\
0  \&1 \& \phantom{-} 4 \\
} ;
\draw (n-2-1.north west) to[ right] (n-3-1.south west);
\draw (n-2-3.north east) to[ left] (n-3-3.south east);

\matrix (o) [matrix of math nodes,nodes in empty cells, ampersand replacement=\& ] at (4,0) {  (x*) \& 1 \& x \\ 
1   \&0 \&-4 \\
 x \&1 \& \phantom{-}0 \\
} ;
\draw (o-2-2.north west) to[ right] (o-3-2.south west);
\draw (o-2-3.north east) to[ left] (o-3-3.south east);

\matrix (p) [matrix of math nodes,nodes in empty cells, ampersand replacement=\& ] at (6,0) {  (y*) \& 1 \& x \\ 
1   \& \phantom{-}0 \&8 \\
 x \&-2  \&  0 \\
} ;
\draw (p-2-2.north west) to[ right] (p-3-2.south west);
\draw (p-2-3.north east) to[ left] (p-3-3.south east);

\draw (.8,-.5) node{,} ;
\draw (2.9,-.5) node{,};
\draw (4.95,-.5) node{,};
\draw (7,-.5) node{.};
\end{tikzpicture} 
\end{ex}

\subsection{Finite precision}

We can now analyze the loss in precision when
applying Algorithms \ref{algo:trop_LUP} and \ref{algo:mult_mat}.
To prevent loss in precision to explode exponentially,
we replace Line 5 of Algorithm \ref{algo:trop_LUP} with the following
two rows:
\begin{enumerate}
\item By pivoting using the 'leading terms' of the rows $j$ for $j>i$, eliminate all the coefficients possible of row $i$ ;
\item By pivoting with row $i$, eliminate all the coefficients on the $i$-th column. 
\end{enumerate}

The first row makes sense because by construction, all the rows of $\mathscr{M}$
have distinct leading terms, and this is kept unchanged during the pivoting process.

\begin{prop}
Let us assume that the matrix built on Line 17 of Algorithm \ref{algo:mult_mat}
has coefficients in $K$ known at precision $O(\pi^N)$.
All rows have distinct leading terms, leading coefficient $1$
and let us take $\Xi$ be the smallest valuation of a coefficient
of this matrix $\mathscr{M}$. We assume that $\Xi \leq 0$.
Let $l = rank(\mathscr{M}).$
We assume that $N>-l^2 \Xi.$
Then, after the application of Algorithm \ref{algo:trop_LUP}\footnote{using the modification
presented just above this proposition}, the coefficients
of the obtained matrix $\tilde{\mathscr{M}}$ are known at precision $O(\pi^{N+l^2 \Xi})$,
and the smallest valuation of a coefficient $\tilde{\mathscr{M}}$ is lower-bounded by
$l \Xi.$
\end{prop}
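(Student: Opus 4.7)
The plan is to analyze the modified Algorithm \ref{algo:trop_LUP} through a joint induction tracking two quantities at each of the $l$ pivoting rounds: the minimum valuation $\Xi_k$ of any coefficient of the working matrix after $k$ rounds, and the minimum absolute precision $N_k$ of any coefficient. Initially $\Xi_0 = \Xi$ and $N_0 = N$; the targets are $\Xi_l \geq l\Xi$ and $N_l \geq N + l^2 \Xi$.

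For the valuation bound, I would exploit the structural invariant that every row retains a leading entry of valuation $0$ at its (distinct) leading column throughout the computation. In an elimination step $r \leftarrow r - \alpha s$, where $\alpha = r[c_s]$ is the entry at $s$'s leading column, the product $\alpha s$ inherits one entry of valuation $v(\alpha) \geq \Xi_{k-1}$ at column $c_s$ (which is zeroed by the subtraction anyway) and other entries of valuation $\geq \Xi_{k-1} + \Xi$, since $s$'s non-leading entries have valuation $\geq \Xi$. A linear recurrence $\Xi_k \geq \Xi_{k-1} + \Xi$ then follows by careful bookkeeping across Sub-steps 1 and 2 of each round, yielding $\Xi_l \geq l\Xi$. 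An alternative and perhaps cleaner route is to write $\tilde{\mathscr{M}} = P\mathscr{M}$, where $P$ is the inverse of the pivot-column submatrix $\mathscr{M}_{\mathrm{piv}}$, and exploit that the diagonal of $\mathscr{M}_{\mathrm{piv}}$ (after reindexing) consists of $1$'s to bound the valuations of the entries of $P$ via the Cramer/adjoint formula.

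For the precision analysis, I would apply the standard $p$-adic arithmetic rules: for $a, b$ with valuations $v_a, v_b$ and precisions $N_a, N_b$, the product $ab$ is known at precision at least $\min(N_a + v_b, N_b + v_a)$, and addition respects the minimum of precisions. Propagating these through each pivoting step yields $N_k \geq N_{k-1} + \Xi_{k-1} \geq N_{k-1} + (k-1)\Xi$, using the valuation bound just obtained. Summing over $k = 1,\dots,l$ and using $\Xi \leq 0$ gives $N_l \geq N + \sum_{k=1}^{l}(k-1)\Xi \geq N + l^2\Xi$. The hypothesis $N > -l^2\Xi$ is exactly what ensures $N_k > 0$ throughout, so that no coefficient is ever absorbed into its own $O(\pi^{N_k})$ uncertainty during the algorithm and every pivoting step remains well-defined.

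The main obstacle I anticipate is preventing the naive compounding bound $\Xi_k \geq 2\Xi_{k-1}$, which would produce an exponential loss of $2^l |\Xi|$ rather than the claimed linear $l |\Xi|$. The linear bound hinges crucially on always multiplying by a row whose leading entry has valuation exactly $0$, so only a single power of $\Xi$ is accumulated per elimination rather than two. Making this airtight across the two sub-steps of each round — where row $i$ is first successively modified by the as-yet-unprocessed rows $j > i$, and then used to eliminate the entries in its pivot column from every other row — requires careful combinatorial tracking to confirm that the cumulative worsening of entries remains linear in the number of pivot operations.
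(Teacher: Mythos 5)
Your overall architecture matches the paper's: induct over the $l$ pivoting rounds, track a minimum valuation $\Xi_k$ and a minimum precision $N_k$, derive linear recurrences, and sum. But the crucial step — why the recurrence $\Xi_k \geq \Xi_{k-1}+\Xi$ is linear rather than doubling — is justified differently in the paper, and your version has a gap.

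The paper's proof splits the columns of $\mathscr{M}$ into those indexed by $\overline{L}\cup \LT(G)$ (the ``pivot'' side) and those indexed by $B_\leq$, and the load-bearing invariant is that the entries of the still-unprocessed rows on the $\overline{L}\cup \LT(G)$ columns retain valuation $\geq \Xi$ throughout the algorithm. This holds because, once row $i$ has gone through sub-step~1, its entries on the $\overline{L}\cup\LT(G)$ columns are all zero except for a coefficient $\approx 1$ at $c_i$; so sub-step~2 leaves the pivot-side entries of the other rows untouched (only zeroing $c_i$). Since the multiplier $\alpha$ in every elimination is read off precisely such a pivot-side entry, one gets $v(\alpha)\geq \Xi$, and the product $\alpha s$ degrades the valuation by only one summand of $\Xi$ per step.

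You instead place the bound on $s$: ``$s$'s non-leading entries have valuation $\geq \Xi$.'' This is not an invariant of the algorithm. After sub-step~2 of round~1 has run, the $B_\leq$ entries of every not-yet-processed row have already been degraded (to roughly $l\Xi$ in the paper's accounting), so in round $k\geq 2$ the non-leading entries of the row $s$ you pivot with are no longer bounded below by $\Xi$. With $v(\alpha)\geq\Xi_{k-1}$ and $v(s)\geq\Xi_{k-1}$ you would get exactly the quadratic compounding $\Xi_k\geq 2\Xi_{k-1}$ you say you must avoid. Similarly, the intuition that linearity ``hinges crucially on always multiplying by a row whose leading entry has valuation exactly $0$'' misses the point: $s[c_s]=1$ only controls the entry of $\alpha s$ at $c_s$, which is zeroed out anyway; the entries that matter are $\alpha s[c']$ for $c'\neq c_s$, and those are controlled by bounding $v(\alpha)$, not by $s$'s leading entry. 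Establishing $v(\alpha)\geq\Xi$ requires exactly the pivot-column/$B_\leq$-column dichotomy that you do not invoke.

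Your alternative idea, bounding $\tilde{\mathscr{M}}=P\mathscr{M}$ via the Cramer/adjoint formula for the inverse of the pivot submatrix, is genuinely different from the paper's approach and could in principle give the valuation bound more cleanly; but the paper's statement also asserts a precision bound $O(\pi^{N+l^2\Xi})$, and you do not explain how the adjoint route would deliver that, so this remains a sketch of an alternative rather than a replacement.
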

\begin{proof}
After the reduction of row 1 by the other rows, the smallest valuation on
row 1 is lower-bounded by $l \Xi$ and its coefficients are known at precision at least $O(\pi^{N+l \Xi}).$
The coefficients of row 1 for the columns indexed by $\overline{L}\cup \LT (G)$ are all zeros, except
for its leading coefficient, which is $1+O(\pi^{N+(l-1) \Xi}).$
After the reduction of the other rows by row 1, on the rows of index $>1$,
the coefficients for the columns indexed by $\overline{L}\cup \LT (G)$
are of valuation at least $\Xi$ and known at precision $O(\pi^{N+l \Xi}).$
The coefficients for the columns indexed by $B_\leq$
are of valuation at least $l\Xi$ and known at the same precision.
The desired result follows by an easy induction argument. 
\end{proof}

We then upper-bound the loss in precision
for the whole computation of the multiplication matrices.
Recall that: $D=1+\underset{x^\alpha \in B_\leq}{\max} \vert x^\alpha \vert $.

\begin{prop}
Let us assume that the smallest valuation of a coefficient of $G$
is $\Xi$ and that the coefficients of $G$ are known at precision $O(\pi^N)$.
As $G$ is reduced, we get that $\Xi \leq 0$.

Then the coefficients of the matrices $M_1,\dots,M_n$ are of
valuation at least $\left( n \delta \right)^D \Xi$,
and are known at precision $O \left(\pi^{N+\left( \frac{(n \delta)^{2D+2}-1}{(n \delta)^2-1} \right) \Xi} \right).$
\end{prop}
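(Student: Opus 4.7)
The plan is to prove this proposition by induction on the degree $d$ appearing in the outer \textbf{for loop} on Line 14 of Algorithm \ref{algo:mult_mat}. At each degree the algorithm adds rows to $\mathscr{M}$ and then invokes the tropical row-echelon form computation (Algorithm \ref{algo:trop_LUP}), so the previous proposition will give the per-step loss, and the global bound will follow by iterating this bound across the $O(D)$ degrees.

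First, I would set up notation. Let $\Xi_d$ be a lower bound on the valuation of any entry of $\mathscr{M}$ just before the call to Algorithm \ref{algo:trop_LUP} at degree $d$, and let $N_d$ be the corresponding precision. The size of $\mathscr{M}$ is controlled by the fact that both its rows and columns are indexed by monomials in $L \cup B_\leq$, whose cardinality is $O(n\delta)$; hence the rank satisfies $l \leq n\delta$. Initially we are handed the reduced basis $G$, so $\Xi_0 \geq \Xi$ and $N_0 \geq N$. The inductive step proceeds in two substeps: (i) stacking new rows of the form $x_i g$, which only copies already-known coefficients and therefore preserves the valuation lower bound and precision; (ii) applying the modified Algorithm \ref{algo:trop_LUP}. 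By the previous proposition applied with $l \leq n\delta$, this yields
\[
\Xi_{d+1} \geq (n\delta)\,\Xi_d, \qquad N_{d+1} \geq N_d + (n\delta)^2\, \Xi_d.
\]
I would also verify at each step that the hypothesis $N_d > -l^2\Xi_d$ of the previous proposition is satisfied, which is inherited from the global hypothesis on $N$.

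Solving the first recurrence gives $\Xi_d \geq (n\delta)^d \Xi$. Since only degrees up to $D-1$ appear in $\overline{L}$, we obtain the valuation bound $\Xi_D \geq (n\delta)^D \Xi$ for the final matrix $\mathscr{M}$. Unrolling the second recurrence and substituting the bound on $\Xi_d$, the cumulative precision loss is a geometric sum of the form
\[
\sum_{d=0}^{D}(n\delta)^{2d}\,\Xi \;=\; \frac{(n\delta)^{2D+2}-1}{(n\delta)^2-1}\,\Xi,
\]
which matches the claimed bound. The bounds on the $M_i$ follow at once from Lines 23--25 of Algorithm \ref{algo:mult_mat}, since each coefficient of $M_i$ is read off as $-\mathscr{M}[s,x^\beta]/\mathscr{M}[s,x^\alpha]$ where the denominator is the leading coefficient $1$; in particular it inherits directly the valuation and precision bounds on $\mathscr{M}$.

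The main obstacle I anticipate is bookkeeping: one has to make sure that the rows added at degree $d$ do not worsen $\Xi_d$ beyond what the inductive hypothesis provides, i.e.\ that multiplying a previously reduced row (or an element of $G$) by a variable $x_i$ does not introduce lower-valuation entries. This is immediate since multiplication by $x_i$ permutes the monomial indexing and leaves the coefficients (hence their valuations and precisions) unchanged, so the bound $\Xi_d$ truly controls the entries of the augmented matrix. A secondary care point is that the number of pivoting rounds inside the modified Algorithm \ref{algo:trop_LUP} is bounded by the rank $l \leq n\delta$, which is what justifies the $l^2$ factor in the per-step precision loss and ultimately gives the clean geometric closed form above.
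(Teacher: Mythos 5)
Your approach is essentially the paper's: apply the preceding precision-loss proposition once per degree, track a lower bound $\Xi_d$ on the valuation and $N_d$ on the precision, and sum up. The stacking step indeed costs nothing, and the final read-off into $M_i$ is harmless since the pivot is $1 + O(\pi^\ast)$. So the structure is right.

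There is, however, a bookkeeping slip in the unrolling. Your recurrence $\Xi_{d+1}\geq (n\delta)\Xi_d$, $N_{d+1}\geq N_d+(n\delta)^2\Xi_d$, combined with $\Xi_d\geq (n\delta)^d\Xi$, gives $N_D\geq N+\sum_{d=0}^{D-1}(n\delta)^{d+2}\Xi$, \emph{not} $N+\sum_{d=0}^{D}(n\delta)^{2d}\Xi$; the exponent should grow linearly, not quadratically, once you have already capped the per-step rank by $n\delta$. Your sum is therefore not what your recurrence produces. Fortunately this is harmless: since $\Xi\le 0$ and $(n\delta)^{d+2}\le (n\delta)^{2d+2}$ for all $d\ge 0$, your (tighter) bound $\sum_{d=0}^{D-1}(n\delta)^{d+2}\Xi$ dominates the stated $\frac{(n\delta)^{2D+2}-1}{(n\delta)^2-1}\Xi$, so the proposition still follows. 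To land directly on the stated geometric series $\sum_{d}(n\delta)^{2d}\Xi$, one should instead use the coarser per-step estimate before specializing to $n\delta$: the loss at step $d$ is $l_d^2\,\Xi_{d-1}=l_d^2\, l_1\cdots l_{d-1}\,\Xi \geq l_1^2\cdots l_d^2\,\Xi$, and only then bound each $l_i$ by $n\delta$; this is the route the paper takes, yielding $l_1^2+l_1^2l_2^2+\cdots+l_1^2\cdots l_D^2 \le \sum_{d=1}^{D}(n\delta)^{2d}$, which is (up to the inessential $d=0$ term) the claimed formula.
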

\begin{proof}
This is a corollary to the previous proposition.
There are at most $D$ calls to the previous proposition, with matrices of
ranks $l_1,\dots,l_D$.
Consequently, the upper bound on the valuation is
$l_1  \dots  l_D  \Xi$
and the precision is in $O(\pi^{N+(l_1^2+l_1^2 l_2^2+\dots+l_1^2\dots l_D^2) \Xi})$
which is in $O(\pi^{N+D(l_1^2\dots l_D^2) \Xi})$
As for all $i,$ $l_i \leq n\delta$,
we get the desired bounds.
\end{proof}

\begin{rem}
In the very favorable case where $G$ is homogeneous and 
$w=[0, \dots,0]$, we get that $\Xi =0$, and no loss in precision
is happening.
This is unfortunately not the most interesting case for polynomial system
solving.
Numerical data in Section \ref{sec:num_data} will show
that loss in precision remain very reasonnable when using $w=[0, \dots,0]$ even in the affine case.
\end{rem}


\subsection{Using semi-stability} 
\label{subsec:using_semi_stability}

Following Huot's PhD thesis  \cite{Huot:13}, when Borel-fixedness (see Subsec. \ref{subsec:borel_fixedness}) or semi-stability properties are
satisfied, many arithmetic operations can be avoided during the computation of the multiplication matrices.
We begin with semi-stability.

\begin{deftn}
$I$ is said to be semi-stable for $x_n$ if for all $x^\alpha$ such that $x^\alpha \in LM(I)$ and $x_n \mid x^\alpha$ we have for all $k \in \llbracket 1, n-1 \rrbracket$ $\frac{x_k}{x_n} x^\alpha \in LM(I).$
\end{deftn}

Semi-stability's application is explained in Proposition 4.15, Theorem 4.16 and Corollary 4.19 of \cite{Huot:13} (see also Section 4 of \cite{Faugere:2014}). We recall the main idea here with its adaptation to the
tropical setting:
\begin{prop} \label{prop:Huot} 
If $I$ is semi-stable for $x_n,$ $M_n$ can be read from $G$ and requires no arithmetic operation.\end{prop}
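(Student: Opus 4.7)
The plan is to show that, under semi-stability for $x_n$, every product $x_n x^\alpha$ with $x^\alpha \in B_\leq$ lies in $B_\leq \cup \LT(G)$, i.e. falls into type 1 or type 2 of the classification in the comparison with the classical case. Once this is established, reading $M_n$ off requires no arithmetic operation: a type-1 column is a standard basis vector, and a type-2 column is extracted directly from the coefficients of the element of $G$ whose leading term is $x_n x^\alpha$.

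First I would dispose of the easy types. If $x_n x^\alpha \in B_\leq$, the column of $M_n$ indexed by $x^\alpha$ is the basis vector at the index $x_n x^\alpha$. If instead $x_n x^\alpha = \LT(g)$ for the (unique, since $G$ is reduced) $g = x_n x^\alpha + \sum_{x^\beta \in B_\leq} g_{x^\beta} x^\beta \in G$, then that column is $(-g_{x^\beta})_{x^\beta \in B_\leq}$. In both cases the entries of $M_n$ come straight from $G$ or the identity, with no field operation.

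The core step is to rule out type 3 for $x_n x^\alpha$. Suppose, for a contradiction, that $x_n x^\alpha$ belongs to $\LT(I)$ but is not a minimal generator of $\LT(I)$; then there is a variable $x_j$ such that $\tfrac{x_n x^\alpha}{x_j} \in \LT(I)$. If $j = n$ this reads $x^\alpha \in \LT(I)$, contradicting $x^\alpha \in B_\leq$. Otherwise $j \in \llbracket 1, n-1 \rrbracket$, so $x_j$ divides $x^\alpha$; setting $x^\gamma = x^\alpha / x_j$, we obtain $x_n x^\gamma \in \LT(I)$. Since $x_n \mid x_n x^\gamma$, semi-stability applies and yields $\tfrac{x_j}{x_n}\bigl(x_n x^\gamma\bigr) = x_j x^\gamma = x^\alpha \in \LT(I)$, again contradicting $x^\alpha \in B_\leq$. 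Combined with the previous paragraph, this proves the proposition.

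The main obstacle I expect is choosing the right application of semi-stability: its hypothesis concerns multiples of $x_n$ that already lie in $\LT(I)$, so one must first repackage the assumed proper divisor of $x_n x^\alpha$ as such a multiple, namely $x_n x^\gamma$, before swapping $x_n$ against $x_j$ and obtaining the contradiction. Once this rearrangement is spotted, everything reduces to a one-line application of the definition, and the complete absence of arithmetic in the construction of $M_n$ follows immediately from the type-1/type-2 reading.
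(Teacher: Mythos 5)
Your proof is correct. The structure matches the paper's: reduce to showing that no $x_n x^\alpha$ with $x^\alpha\in B_\leq$ can be of type~3, then read $M_n$ off $G$ for types~1 and~2. Your route to the contradiction differs slightly in how the intermediate monomial for semi-stability is produced. The paper writes $x_n x^\alpha=\LM(mg)$ with $g\in G$, deduces $x_n\nmid m$ and $|m|\geq 1$, picks $x_k\mid m$, and applies semi-stability to $\LM(g)$, finally multiplying by $m/x_k$ to land on $x^\alpha$. You instead invoke the abstract characterization of non-minimal generators of a monomial ideal to get $\tfrac{x_n x^\alpha}{x_j}\in\LT(I)$ for some $x_j$, split on $j=n$ versus $j<n$, and apply semi-stability directly to $x_n x^\gamma=\tfrac{x_n x^\alpha}{x_j}$, landing on $x^\alpha$ in one step. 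The combinatorial mechanism — trading $x_n$ for a smaller variable inside $\LM(I)$ and multiplying up to $x^\alpha$ — is identical; your version bypasses the explicit appeal to a Gr\"obner basis element and is, if anything, a bit more streamlined.
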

\begin{proof}
The proof is the same as that of Theorem 8 of \cite{Faugere:2014}.
We prove that $\overline{L}\cap x_n B_\leq=\emptyset.$
Let  $x_n x^\alpha \in \overline{L}\cap x_n B_\leq,$ with $x^\alpha \in B_\leq.$
Then there is some monomial $m$ and $g \in G$ such that $LM(mg)=x_n x^\alpha .$
As $x^\alpha \in B_\leq,$ we get that $x_n \nmid m$.
Since $x_n x^\alpha \in \overline{L},$ then $\vert m \vert \geq 1.$
Let $k<n$ be such that $x_k \mid m.$
Then, by semi-stability for $x_n$,
$x^\alpha = \frac{m}{x_k} \times \frac{x_k LM(g)}{x_n} \in LM(I),$
which is a contradiction.
\end{proof}

\begin{algorithm} 
  \SetKwInOut{Input}{input}\SetKwInOut{Output}{output}
  \Input{A reduced GB $G$ of the ideal $I$ for $\leq$, a tropical term ordering, assuming $I$ is semi-stable for $x_n$}
  \Output{$M_n$ the matrix of the multiplication by $x_n$ in $A/I$}

  Using $LT(G)$, computes $B_\leq$ (and $\delta = \sharp(B_\leq)$)\;
  Define $M_n$ as a zero matrix in $K^{\delta \times \delta}$, its rows and columns
  are indexed by the $x^\alpha \in B_\leq$  \;
  Compute $L_n=\{x_n x^\alpha, \textrm{ for } x^\alpha \in B_\leq \}.$ \;
  \For{$x^\alpha \in L_n \cap B_\leq$}
  		{ Set $M_n[x^\alpha, \frac{x^\alpha}{x_n}]=1$ \;}	  
  		
  \For{$x^\alpha \in L_n \cap LT(G)$}
  		{	Take $g \in G$ such that $g$ can be written $g=x^\alpha +\sum_{x^\beta \in B_\leq} g_{x^\beta} x^\beta$.
  			\For{$x^\beta \in B_\leq$}
				 {Set $M_n[x^\beta, \frac{x^\alpha}{x_i}]=-g_{x^\beta}$ \; }  				
  		}	  
  \textbf{Return} $M_n$ \;
  
  \caption{Computing $M_n$, when semi-stable for $x_n$} \label{algo:M_n_when_semi_stable}
\end{algorithm}

Thanks to Proposition \ref{prop:Huot}, Algorithm \ref{algo:M_n_when_semi_stable} is correct,
and its arithmetic cost is given by the following proposition.

\begin{prop}
Given a reduced GB $G$ of the ideal $I$ for $\leq$, a tropical term ordering, and assuming $I$ is \textit{semi-stable for} $x_n$, then $M_n$ can be computed in $O(\delta^2)$ arithmetic operations, which are only computing opposites.
\end{prop}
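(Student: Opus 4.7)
The plan is to combine Proposition~\ref{prop:Huot} (which is the content of the previous lemma) with a direct operation count in Algorithm~\ref{algo:M_n_when_semi_stable}. The key observation is that under the semi-stability hypothesis, the obstructive third type of monomials -- those in $L$ but neither in $B_\le$ nor in $\LT(G)$ -- cannot appear inside $x_n B_\leq$. Therefore every $x_n x^\alpha$ with $x^\alpha \in B_\leq$ falls into exactly one of the two easy cases, which is what makes the quadratic bound possible.

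Concretely, I would first invoke Proposition~\ref{prop:Huot} to conclude $\overline{L} \cap x_n B_\leq = \emptyset$, so that $L_n = (L_n \cap B_\leq) \sqcup (L_n \cap \LT(G))$. This means the two explicit \textbf{for} loops in Algorithm~\ref{algo:M_n_when_semi_stable} account for every column of $M_n$, and no row-echelon step is ever triggered.

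Next I would count operations loop by loop. The first loop ranges over $L_n \cap B_\leq$ and writes a single entry equal to $1$ at position $[x^\alpha, x^\alpha/x_n]$; this costs no arithmetic, only at most $\delta$ assignments. The second loop ranges over $L_n \cap \LT(G)$, which has at most $\delta$ elements; for each such $x^\alpha$ we locate the unique reduced $g \in G$ with $\LM(g) = x^\alpha$ (a table lookup, given that $G$ is reduced), then for each of the at most $\delta$ monomials $x^\beta \in B_\leq$ we store $-g_{x^\beta}$ in $M_n[x^\beta, x^\alpha/x_n]$. Each such update requires exactly one opposite (negation) in $K$, and no other arithmetic. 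Summing gives at most $\delta \cdot \delta = O(\delta^2)$ opposites overall, and no other field operations.

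There is no real obstacle here: all the algorithmic difficulty was absorbed by the semi-stability assumption via Proposition~\ref{prop:Huot}, which guarantees the absence of third-type monomials and hence the absence of any linear-algebra reduction step. The only point worth stating carefully is that the coefficients $g_{x^\beta}$ appearing in $g = x^\alpha + \sum_{x^\beta \in B_\leq} g_{x^\beta} x^\beta$ are already available from the input GB, so negating them is truly the sole arithmetic operation performed.
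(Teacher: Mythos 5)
Your proof is correct and follows exactly the reasoning the paper leaves implicit: Proposition~\ref{prop:Huot} guarantees $\overline{L}\cap x_n B_\leq=\emptyset$, so the two explicit loops of Algorithm~\ref{algo:M_n_when_semi_stable} cover every column, and a direct count gives at most $\delta^2$ negations as the only field operations. You have simply made explicit the operation count that the paper states without proof after asserting the correctness of Algorithm~\ref{algo:M_n_when_semi_stable}.
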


To apply the previous result to compute a GB in shape position in Subsection \ref{subsec:trop_to_shape}, we need
to also compute the $NF(x_i)$'s. The following lemma states that this is not costly.

\begin{lem}
Given a reduced GB $G$ of the ideal $I$ for $\leq$, a tropical term ordering, then the $NF_\leq (x_i)$'s
 can be computed in $O(n\delta)$ arithmetic operations, which are only computing opposites.
\end{lem}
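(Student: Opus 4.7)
The plan is to treat each variable $x_i$ separately and to show that only the ``type 1'' and ``type 2'' cases from the case analysis of the classical subsection can occur, so that no linear algebra is needed.

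Fix $i \in \llbracket 1, n \rrbracket$. If $x_i \in B_\leq$, then $NF_\leq(x_i) = x_i$ and no arithmetic operation is required. If $x_i \in \LM_\leq(I)$, then since $\LT(G)$ generates $\LT_\leq(I)$, there exists some $g \in G$ with $\LM(g)$ dividing $x_i$. Because $x_i$ is a variable, the only divisors of $x_i$ are $1$ and $x_i$, and as $I$ is proper we have $1 \notin \LT(G)$. Hence $\LM(g) = x_i$, i.e.\ $x_i \in \LT(G)$. Thus the ``type 3'' case is impossible for variables.

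In the ``type 2'' case, since $G$ is reduced, the corresponding element writes as
\[
g \;=\; x_i + \sum_{x^\beta \in B_\leq} g_{x^\beta}\, x^\beta,
\]
so we can read off
\[
NF_\leq(x_i) \;=\; -\sum_{x^\beta \in B_\leq} g_{x^\beta}\, x^\beta
\]
directly from the coefficients of $g$, at a cost of at most $\delta$ opposites (and no other arithmetic). Summing over the $n$ variables yields the total bound of $O(n\delta)$ opposites.

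I expect no real obstacle here: the only non-routine step is ruling out the ``type 3'' case, which follows immediately from the fact that a variable is atomic for divisibility and $G$ is reduced. The rest is bookkeeping.
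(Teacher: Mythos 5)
Your proof is correct and follows essentially the same approach as the paper's (one-line) proof: the paper asserts, invoking degree-compatibility, that each $x_i$ lies either in $\LT(G)$ or in $B_\leq$, and your argument makes this explicit by noting that a variable has no nontrivial proper divisor, so the ``type 3'' case cannot occur and the normal form is read directly off a GB element (or is $x_i$ itself). The extra detail you give about $1\notin\LT(G)$ because $I$ is proper is the precise content behind the paper's degree-compatibility remark, so the two arguments coincide in substance.
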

\begin{proof}
It is a consequence of the fact that $\leq$ is degree-compatible: for any $i$, $x_i$ is
either in $\LT(G)$ or in $B_\leq$.
\end{proof}

Subsection \ref{subsec:trop_to_shape} will apply the previous two
results to obtain a fast algorithm to compute a shape-position basis. 

\begin{rem}
For grevlex in the classical case, it is known that after a generic change of variable, $I$ is semi-stable for $x_n.$
The reason is that after a generic change of variable, $LT(I)$ is equal to the GIN of $I$ (see Definition 4.1.3 of \cite{HH:2011}) , which
is known to be Borel-fixed, and Borel-fixedness implies semi-stability for $x_n.$
In Section \ref{sec:gin}, we investigate whether this strategy is still valid
in the tropical case.
\end{rem}

\section{GIN and Borel-fixed initial ideal}
\label{sec:gin}

In this section, we introduce the tropical generic initial ideal of a $0$-dimensional ideal analogously to the classical case, and study its properties of Borel-fixedness and semi-stability. 
The desired goal is to be able to use the fast Algorithm \ref{algo:M_n_when_semi_stable} after a (generic) change of variable.

\subsection{Tropical GIN}

We follow the lines of Chapter 4 of \cite{HH:2011}, and use the usual action of $\GL_n(K)$ on
$A$: $(\eta,f(x)) \in \GL_n(K) \times A \mapsto \eta(f):=f(\eta^\top \cdot x).$

\begin{deftn}
    An external product of monomials $x^{\alpha_1}\wedge \cdots \wedge x^{\alpha_k}$ is called a {\em standard exterior monomial} if $x^{\alpha_1}\geq \cdots \geq x^{a_k}$. If its monomial is standard, a term $c  x^{\alpha_1}\wedge \cdots \wedge x^{\alpha_k}$ is called a {\em standard exterior term}.
    We define an ordering on standard exterior terms by setting that:
    $cx^{\alpha_1} \wedge \cdots \wedge x^{\alpha_k}\geq dx^{\beta_1}\wedge \cdots \wedge x^{\beta_k}$ if $\val (c)+\sum_{i=1}^k w \cdot \alpha_i<\val (d)+\sum_{i=1}^k w\cdot \beta_i$, or $\val (c)+\sum_{i=1}^k w\cdot \alpha_i=\val (d)+\sum_{i=1}^k w\cdot \beta_i$ and there exists $1\leq j\leq k$ s.t. $x^{\alpha_j} > x^{\beta_j}$ and $x^{\alpha_i}=x^{\beta_i}$ for all $i<j$.
    We then define the leading term of an external product 
    of polynomials $f_1 \wedge \dots \wedge f_k$
    as its largest term, and denote it by $\LT ( f_1 \wedge \dots \wedge f_k).$
    The monomial of the leading term is denoted by $\LM (f_1 \wedge \dots \wedge f_k).$
\end{deftn}

\begin{lem} \label{lem:initial}
  Let $(f_1,\ldots,f_t) \in A^t$. If $\LT (f_1)> \cdots > \LT (f_t)$, then $\LT (f_1\wedge \cdots \wedge f_t)=\LT (f_1)\wedge \cdots \wedge \LT (f_t)$. 
\end{lem}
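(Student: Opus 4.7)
The plan is a direct multilinear expansion of $f_1\wedge\cdots\wedge f_t$ followed by a case analysis in the exterior order. Write each $f_i=\sum_j c_{i,j}x^{\alpha_{i,j}}$ with the index $j=1$ labelling the leading term, so that $\LT(f_i)=c_{i,1}x^{\alpha_{i,1}}$. By multilinearity of the wedge,
\[
f_1\wedge\cdots\wedge f_t \;=\; \sum_{(j_1,\ldots,j_t)}\Bigl(\prod_{i=1}^t c_{i,j_i}\Bigr)\, x^{\alpha_{1,j_1}}\wedge\cdots\wedge x^{\alpha_{t,j_t}}.
\]
Each summand either vanishes (when two exponent vectors agree) or, after applying the sort permutation with its sign, becomes a standard exterior term. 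The summand coming from the multi-index $(1,\ldots,1)$ is $E_0:=\LT(f_1)\wedge\cdots\wedge\LT(f_t)$, and the hypothesis that the $\LT(f_i)$ are strictly decreasing forces the $x^{\alpha_{i,1}}$ to be pairwise distinct, so $E_0$ is nonzero and already in standard form.

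The next step is to compare any other nonzero summand $E_\tau$, indexed by a tuple $(j_1,\ldots,j_t)\neq(1,\ldots,1)$, against $E_0$ in the exterior order. For every $i$ with $j_i\neq 1$, the tropical inequality $\LT(f_i)>c_{i,j_i}x^{\alpha_{i,j_i}}$, examined at equal total degree, yields $\val(c_{i,1})+w\cdot\alpha_{i,1}\leq \val(c_{i,j_i})+w\cdot\alpha_{i,j_i}$, and for $j_i=1$ the two sides are equal. Summing over $i$, the valuation-plus-weight total of $E_0$ is at most that of $E_\tau$, so the first criterion of the exterior order delivers $E_\tau\leq E_0$. Whenever this bound is strict we are done; otherwise equality holds coordinate by coordinate, and the tropical strict inequality at each $i$ with $j_i\neq1$ must therefore come from the monomial-order tiebreaker, i.e.\ $x^{\alpha_{i,1}}>_m x^{\alpha_{i,j_i}}$.

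The main obstacle is precisely this tie case, because the permutation that standardises $E_\tau$ can shuffle the exponents so that the coordinatewise inequalities do not immediately translate into a lexicographic inequality between standard-form sequences. To handle it I would induct on the number of coordinates where $(j_1,\ldots,j_t)$ differs from $(1,\ldots,1)$. The single-coordinate case reduces to comparing two exterior terms that agree on all but one factor, where the tiebreaker $x^{\alpha_{i,1}}>_m x^{\alpha_{i,j_i}}$ propagates to the first position of disagreement in the sorted sequences and settles the comparison. The general case follows by iteratively replacing one non-leading choice at a time with its leading counterpart, each such step being covered by the single-coordinate analysis and strictly increasing the exterior value. Putting the pieces together, $E_0$ strictly dominates every other summand, so $\LT(f_1\wedge\cdots\wedge f_t)=E_0=\LT(f_1)\wedge\cdots\wedge\LT(f_t)$ as claimed.
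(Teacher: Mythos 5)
Your proof follows essentially the same route as the paper's: expand the wedge multilinearly, compare each cross-term to $\LT(f_1)\wedge\cdots\wedge\LT(f_t)$ by summing the coordinate-wise valuation-plus-weight inequalities coming from each $\LT(f_i)$ being leading, then settle the tie case using the monomial-order tiebreaker together with a sorting argument. The paper compresses that last sorting step into a single assertion (``as $\Gamma$ is a standard exterior term, we deduce\ldots''), while you sketch an induction on the number of non-leading coordinates, but the underlying combinatorial observation is the same.
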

\begin{proof}
    Let $c_i$ be the coefficient of $\LM (f_i)$ in $f_i$. Then, $c=\prod c_i$ is the coefficient of $\Gamma=\LT (f_1)\wedge \cdots \wedge \LT (f_t)$ in $f_1\wedge \cdots \wedge f_t$.  We may assume that the $f_i$'s are ordered such that
$c\LT (f_1)\wedge \cdots \wedge \LT (f_t)$ is a standard exterior term.
Let $\Delta=dv_1\wedge \cdots \wedge v_t$ be another term in $f_1\wedge \cdots \wedge f_t$ and $d_i$ the coefficient of $v_i$ in  $f_i$.  Let $x^{\alpha_i}=\LM (f_i)$ and $x^{\beta_i}=v_i$. Since $c_i x^{\alpha_i}$ is the leading term of $f_i$,  it follows that $\val (c_i)+ w\cdot \alpha_i \le \val (d_i)+w\cdot \beta_i$. Thus, $\sum_{i=1}^t (\val (c_i)+ w\cdot \alpha_i)\le \sum_{i=1}^t (\val (d_i)+w\cdot \beta_i)$. As $\val(c)=\sum_{i=1}^t c_i$ and $\val(d)=\sum_{i=1}^t d_i$, we obtain $\val (c)+\sum_{i=1}^k w\cdot \alpha_i\le \val (d)+\sum_{i=1}^k w\cdot \beta_i$. If the inequality is strict
then $\Gamma$ is strictly bigger than any permutation of the monomials
of $\Delta$ such that a standard exterior term is obtained.
If equality holds. Then, for all $i$, $\val (c_i)+ w\cdot \alpha_i = \val (d_i)+w\cdot \beta_i$  and $x^{\alpha_i} \geq x^{\beta_i}$.
As $\Gamma$ is a standard exterior term, we deduce that also in this case,
$\Gamma$ is strictly bigger than any permutation of the monomials
of $\Delta$ such that a standard exterior term is obtained. 
\end{proof}

\begin{lem} \label{lem:basis}
	Let $V\subset A$ be a $t$-dimensional $K$-vector space.
    Let $w_1,\ldots,w_t$ be monomials with $w_1 >\cdots > w_t$. Then the following conditions are equivalent. 
    \begin{enumerate}
        \item the monomials $w_1,\ldots,w_t$ form a $K$-basis of $\LT (V)$,
        \item if $(f_1,\ldots,f_t)$ is a $K$-basis of $V$, then $\LM (f_1\wedge \cdots \wedge f_t)=w_1\wedge \cdots \wedge w_t$, 
        \item there exists a $K$-basis $(f_1,\ldots,f_t)$ of $V$ s.t. $\LM (f_1\wedge \cdots \wedge f_t)=w_1\wedge \cdots \wedge w_t$. 
    \end{enumerate}
\end{lem}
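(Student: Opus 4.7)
The plan is to establish the cycle $(1) \Rightarrow (2) \Rightarrow (3) \Rightarrow (1)$. The implication $(2) \Rightarrow (3)$ is immediate since every $t$-dimensional $K$-vector space admits a basis.

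For $(1) \Rightarrow (2)$, I would first pick representatives $g_1,\ldots,g_t \in V$ with $\LM(g_i)=w_i$; these exist precisely because $w_1,\ldots,w_t \in \LT(V)$. Since they have pairwise distinct leading monomials, they are $K$-linearly independent, and by dimension count they form a basis of $V$. After reordering so that $\LT(g_1) > \cdots > \LT(g_t)$, Lemma \ref{lem:initial} yields $\LM(g_1 \wedge \cdots \wedge g_t) = w_1 \wedge \cdots \wedge w_t$. Now any other basis $(f_1,\ldots,f_t)$ of $V$ is related to $(g_1,\ldots,g_t)$ by an invertible change-of-basis matrix $P \in \GL_t(K)$, so $f_1 \wedge \cdots \wedge f_t = \det(P)\cdot g_1 \wedge \cdots \wedge g_t$. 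Multiplying by a nonzero scalar shifts the valuation of every coefficient of the exterior product by the same amount $\val(\det(P))$, which preserves the strict order on standard exterior terms; hence the leading monomial is unchanged.

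For $(3) \Rightarrow (1)$, I would start from the assumed basis $(f_1,\ldots,f_t)$ and apply tropical row-reduction (in the spirit of Algorithm \ref{algo:trop_LUP}) to obtain a new basis $(g_1,\ldots,g_t)$ of $V$ whose leading monomials $v_1 > \cdots > v_t$ are pairwise distinct. These $v_i$'s lie in $\LT(V)$ and are $K$-linearly independent; since $\dim_K \LT(V) = \dim_K V = t$, they constitute a $K$-basis of $\LT(V)$. Applying Lemma \ref{lem:initial} to $(g_1,\ldots,g_t)$ gives $\LM(g_1\wedge \cdots \wedge g_t) = v_1 \wedge \cdots \wedge v_t$. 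On the other hand, $g_1 \wedge \cdots \wedge g_t = c\cdot f_1 \wedge \cdots \wedge f_t$ for some $c \in K^*$, so by the same invariance-under-scalar argument used above, combined with the assumption of $(3)$, the leading monomial of this wedge is also $w_1 \wedge \cdots \wedge w_t$. Equating and using that both $v_1 > \cdots > v_t$ and $w_1 > \cdots > w_t$ are strictly decreasing gives $v_i = w_i$ for all $i$, so $w_1,\ldots,w_t$ is indeed a $K$-basis of $\LT(V)$.

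The main subtlety I anticipate is checking that scalar multiplication by $\det(P)$ or $c$ does not alter leading monomials of exterior products in the tropical ordering: because the ordering on standard exterior terms is governed by $\val(c)+\sum w\cdot \alpha_i$ with monomial tie-breaking, a uniform valuation shift preserves both the strict comparisons and the ties, leaving $\LM$ untouched. Once this is cleanly noted, the argument reduces to standard linear algebra combined with Lemma \ref{lem:initial}.
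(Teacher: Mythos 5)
Your proof is correct and follows the paper's approach: establish the cycle $(1)\Rightarrow(2)\Rightarrow(3)\Rightarrow(1)$ via Lemma \ref{lem:initial}, after normalizing (by tropical row-reduction) to a basis of $V$ whose leading terms are monic and pairwise distinct. You are somewhat more explicit than the paper in observing that a change of basis multiplies $f_1\wedge\cdots\wedge f_t$ by $\det(P)\in K^*$, and that this uniform shift of $\val$ preserves the order on standard exterior terms and hence the $\LM$ of the wedge — the paper leaves this implicit behind the phrase ``we may assume,'' so your write-up cleanly closes that small gap.
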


\begin{proof}
    $(1)\Rightarrow (2)$: We may assume that the $f_j$'s are monic and  $\LT (f_1)> \cdots > \LT (f_t)$. Since $\LT (f_i)\in \LT (V)$, there is $j(i)$ s.t. $\LT (f_i)=w_{j(i)}$. As $w_1>_1\cdots >_1 w_t$, we obtain $j(i)=i$ and $\LT (f_i)=w_i$ for all $i$. By Lemma \ref{lem:initial}, $\LT (f_1\wedge \cdots \wedge f_t)=\LT (f_1)\wedge \cdots \wedge \LT (f_t)=w_1\wedge \cdots \wedge w_t$. 
    
    $(2)\Rightarrow (3)$: It is obvious by choosing a $K$-basis $f_1,\ldots,f_t$ of $V$. 
    
    $(3)\Rightarrow (1)$: Since $\dim(V)=\dim(\LT ( V))$ and $w_1,\ldots,w_t$ is linear independent, it is enough to show that $w_i\in \LT (V)$. Let $f_1,\ldots,f_t$ be monic polynomials forming a $K$-basis of $V$ with $\LT (f_1)> \cdots > \LT (f_t)$ and $\LT (f_1\wedge \cdots \wedge f_t)=w_1\wedge \cdots \wedge w_t$. By Lemma \ref{lem:initial}, $\LT (f_1\wedge \cdots \wedge f_t)=\LT (f_1)\wedge \cdots \wedge \LT (f_t)$ and thus $w_i\in \LT (V)$. 
\end{proof}


\begin{prop} \label{prop:3-1}
Let $V\subset A_d$ be a $t$-dimensional $K$-vector space and $f_1,\ldots,f_t$ a basis of $V$. 
    Let $c w_1\wedge \cdots \wedge w_t$ be the largest (up to multiplication by an element of valuation $0$) standard exterior term of $\bigwedge^t A_{\le d}$  such that there exists $\eta \in \GL_n (R)$ with
    \[
    \LT (\eta (f_1)\wedge \cdots \wedge  \eta (f_t))=c w_1\wedge \cdots \wedge w_t. 
    \]
     Let $U_V=\{\eta  \in \GL_n(R)\mid \LT (\eta (f_1)\wedge \cdots \wedge  \eta (f_t))= \varepsilon \times c w_1\wedge \cdots \wedge w_t, \: \val (\varepsilon)=0\}$. Then, $U_V$ is open
     in $\GL_n (R)$ and  for any $\eta, \upsilon \in U_V$, $\LT (\eta V)=\LT (\upsilon V)$.        
\end{prop}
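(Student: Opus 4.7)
The plan is to establish openness via the $p$-adic continuity of $\eta \mapsto \eta(f_1) \wedge \cdots \wedge \eta(f_t)$ together with the ultrametric property of $\val$, and then derive the equality $\LT(\eta V) = \LT(\upsilon V)$ directly from Lemma~\ref{lem:basis}.

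For openness, write $F(\eta) := \eta(f_1) \wedge \cdots \wedge \eta(f_t) = \sum_w c_w(\eta)\, w$, with the sum ranging over the standard exterior monomials $w$ of $\bigwedge^t A_{\le d}$; each coefficient $c_w(\eta) \in K$ is a polynomial in the entries of $\eta$. Fix $\eta_0 \in U_V$ and set $w_0 := w_1 \wedge \cdots \wedge w_t$. By definition, $\val(c_{w_0}(\eta_0)) = \val(c)$, and for every other standard exterior monomial $w$ the ordering of Definition~\ref{defn:trop_term_order} gives $c_{w_0}(\eta_0)\,w_0 > c_w(\eta_0)\,w$. The key observation is the ultrametric: for $\eta$ sufficiently close to $\eta_0$, $\val(c_{w_0}(\eta))$ stays equal to $\val(c)$; for any $w$ with $c_w(\eta_0) \neq 0$, $\val(c_w(\eta))$ remains constant; and for $w$ with $c_w(\eta_0) = 0$, $\val(c_w(\eta))$ can be made arbitrarily large. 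In particular, both the (weighted) strict valuation inequalities and the equalities appearing in the ordering are preserved on a neighborhood of $\eta_0$, along with the fixed monomial tie-breaker. Hence every $\eta$ close enough to $\eta_0$ still lies in $U_V$, which proves openness.

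For the equality of initial spaces, take $\eta, \upsilon \in U_V$. Since $\eta$ acts invertibly on $A$, the tuple $(\eta(f_1),\ldots,\eta(f_t))$ is a $K$-basis of $\eta V$. By the defining property of $U_V$, $\LM(\eta(f_1) \wedge \cdots \wedge \eta(f_t)) = w_1 \wedge \cdots \wedge w_t$, so the implication $(3)\Rightarrow(1)$ of Lemma~\ref{lem:basis} yields that $\{w_1,\ldots,w_t\}$ is a $K$-basis of $\LT(\eta V)$. The same conclusion applies to $\upsilon$, hence $\LT(\eta V) = \LT(\upsilon V)$.

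The main delicacy lies in the openness step, where the tropical term order mixes valuation comparisons with a monomial tie-breaker. Both regimes (strict inequality of weighted valuations and equality broken by the monomial order) need to be shown to be stable under small perturbations of $\eta$, and this stability rests precisely on the ultrametric property: valuations are locally constant on nonzero points, while coefficients vanishing at $\eta_0$ only contribute harmlessly large valuations nearby. Once this is handled, the equality $\LT(\eta V) = \LT(\upsilon V)$ is an essentially formal consequence of Lemma~\ref{lem:basis}.
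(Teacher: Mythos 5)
Your proof is correct and follows essentially the same strategy as the paper: establish openness of $U_V$ via the local behavior of the valuation of the polynomial coefficients $c_w(\eta)$, and then deduce $\LT(\eta V)=\LT(\upsilon V)$ from Lemma~\ref{lem:basis} (implication $(3)\Rightarrow(1)$ applied to the basis $(\eta(f_1),\dots,\eta(f_t))$ of $\eta V$). The one stylistic difference: you verify \emph{each} of the inequalities $c_{w_0}(\eta)\,w_0 > c_w(\eta)\,w$ separately by ultrametric local constancy of valuations, whereas the paper exploits the maximality of $c\,w_1\wedge\cdots\wedge w_t$ to reduce membership in $U_V$ to the single condition $\val(c_{w_0}(\eta))=\val(c)$ (a single polynomial inequality $\val(q(\eta))<\nu$), which is open by discreteness of the valuation; both routes are valid and rest on the same properties of the non-archimedean metric.
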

\begin{proof}
As only a finite amount of monomials are possible and $\val (R)$ is discrete and $\geq 0$, $U_V$ is well-defined.
The valuation being discrete, $U_V$ is open:
$\LT (\eta (f_1)\wedge \cdots \wedge  \eta (f_t))=\varepsilon \times c w_1\wedge \cdots \wedge w_t$ amounts to $\val (q(\eta)) < \nu$ for carefully chosen
$\nu \in \RR$ and polynomial $q \in \mathbb{Z}[k^{n \times n}].$
The last statement follows from Lemma \ref{lem:basis}.
\end{proof}

From Lemma \ref{lem:basis}, $w_1\wedge   \cdots \wedge w_t$ in Prop \ref{prop:3-1} is independent of the choice of basis of $V$. 
For $d \in \ZZ_{\geq 0},$ let $I_{\le d}=I\cap A_{\le d}$.
\begin{theo} \label{theo:tropical gin}
     Let $I$ be a $0$-dimensional ideal with $\delta =\dim_K K[X]/I$. We consider the finite dimensional $K$-vector space $I_{\le \delta}$. 
     Then the non-empty open set $U_I:=U_{I_{\le \delta}} \subset  \GL_n (R)$ satisfies that 
      $\LT (\eta I)=\LT (\upsilon I)$ for any $\eta, \upsilon \in U_I$.
\end{theo}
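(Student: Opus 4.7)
The plan is to deduce the theorem directly from Proposition~\ref{prop:3-1} applied to $V = I_{\leq \delta}$. That proposition already supplies the non-emptiness and openness of $U_I$ inside $\GL_n(R)$, together with the equality $\LT(\eta I_{\leq \delta}) = \LT(\upsilon I_{\leq \delta})$ for every $\eta, \upsilon \in U_I$. What is left is to promote this equality from the truncation in degree $\leq \delta$ to the full ideals $\eta I$ and $\upsilon I$.

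The key auxiliary fact I would prove is a degree bound: for any $0$-dimensional ideal $J \subset A$ with $\dim_K A/J = \delta$, every standard monomial of $J$ has total degree at most $\delta - 1$. The argument is that the set $B_J$ of standard monomials of $J$ is closed under taking divisors (since $\LT(J)$ is a monomial ideal) and has cardinality $\delta$; hence a standard monomial $x^\alpha$ of degree $d$ forces a chain of $d+1$ distinct standard divisors $1, x_{i_1}, x_{i_1}x_{i_2}, \ldots, x^\alpha$, so $d + 1 \leq \delta$. In particular $D \leq \delta$, and every minimal generator of the monomial ideal $\LT(J)$ has degree at most $\delta$, because if it had degree $d > \delta$, any proper divisor would be a standard monomial of degree $d - 1 \geq \delta$, contradicting the bound.

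From this I conclude that $\LT(J)$ is generated, as a monomial ideal, by its elements of degree $\leq \delta$, and since the tropical term order is degree-compatible, that set of generators coincides with $\LT(J_{\leq \delta})$. I then apply this to $J = \eta I$ and $J = \upsilon I$: both remain $0$-dimensional of dimension $\delta$, since $\eta, \upsilon$ act as $K$-algebra automorphisms, and both preserve the filtration $A_{\leq d}$ because the action is linear in the variables, yielding $(\eta I)_{\leq \delta} = \eta I_{\leq \delta}$ and similarly for $\upsilon$. Chaining the equalities gives
\[ \LT(\eta I) \;=\; \langle \LT(\eta I_{\leq \delta}) \rangle \;=\; \langle \LT(\upsilon I_{\leq \delta}) \rangle \;=\; \LT(\upsilon I), \]
with the middle equality provided by Proposition~\ref{prop:3-1}.

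The main obstacle is precisely the passage from the truncated equality to the full one; the rest is bookkeeping about degree-compatibility of the tropical order and the fact that $\GL_n(R)$ acts by graded-filtration-preserving automorphisms of $A$. Once the degree bound $D \leq \delta$ is in place, no further computation with leading exterior terms is required beyond what Proposition~\ref{prop:3-1} has already delivered.
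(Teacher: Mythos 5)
Your proof is correct and follows essentially the same skeleton as the paper: both reduce the statement about the full ideals to Proposition~\ref{prop:3-1} applied to the truncation $I_{\leq \delta}$, using the degree bound $\delta$ on the reduced Gr\"obner basis of a zero-dimensional ideal. The differences are worth noting. The paper asserts the degree bound without justification (it says only that ``$\LT(\eta I_{\leq\delta})$ contains the initial terms in the reduced Gr\"obner basis of $\eta I$''), whereas you derive it cleanly from the divisor-closed structure of the set of $\delta$ standard monomials; that is a genuine improvement in completeness. The concluding step also differs slightly in flavor: you observe that $\LT(\eta I)$ and $\LT(\upsilon I)$ are generated, as monomial ideals, by the common set $\LT(\eta I_{\leq\delta}) = \LT(\upsilon I_{\leq\delta})$, while the paper instead establishes the containment $\LT(\eta I_{\leq d}) \subset \LT(\upsilon I_{\leq d})$ for each $d > \delta$ (by writing every leading monomial as a low-degree multiple of one in $\LT(\eta I_{\leq\delta})$) and then invokes a dimension count. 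Both routes are valid and interchangeable; yours is marginally more direct. One small phrasing quibble: ``any proper divisor would be a standard monomial of degree $d-1$'' should read ``the divisors obtained by removing a single variable,'' since not every proper divisor has degree $d-1$, though the ones you actually use do, and the minimality of the generator is what makes those divisors standard.
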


\begin{proof}
   Let $\eta \in U_I. $ We denote $\LT (\eta I_{\le d})$ by $J_{\le d}$. Then $J_{\le d}=\LT (\upsilon I_{\le d})$ for all $\upsilon \in U_I$ and $d>\delta$. Indeed, since $\LT (\eta I_{\le \delta})$ contains the initial terms in the reduced Gr\"obner basis $G$ of $\eta I$,
    \[
    J_{\le d}\subset A_{\le d-\delta}\LT (\eta  I_{\le \delta})=A_{\le d-\delta}\LT ( \upsilon I_{\le \delta})\subset \LT (\upsilon I_{\le d}).
    \]
    As $\dim_K (J_d)=\dim_K (\LT (\upsilon I_d))$, we obtain $J_d=\LT (\upsilon I_d)$ for all $\upsilon \in U_I$. Since $\LT (\eta  I)=\bigcup_{d=\delta}^{\infty} J_{\le d}$, then $\LT (\eta I)=\LT (\upsilon I)$ for any $\eta,\upsilon \in U_I$, which concludes the proof.
\end{proof}

\begin{deftn}
    We call $\LM(\eta I)$, with $\eta\in U_I \subset \GL_n(R)$ as given in Theorem \ref{theo:tropical gin}, the tropical generic initial ideal (tropical gin) of $I$. 
\end{deftn}
Unfortunately, $U_I$ is not a Zariski-open subset of $GL_n(R)$ in general,
hence the \textit{generic} in the name "tropical gin" is only
given as a reference to the classical case.
The following proposition is a consolation.

\begin{prop}
    Assume $k$ is infinite. Then \[U_I \mod \pi:=\{ \eta \mod \pi, \: \text{for } \eta \in U_I \}\] is a non-empty Zariski-open
    set of $GL_n(k).$ 
\end{prop}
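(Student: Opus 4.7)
The plan is to describe $U_I \mod \pi$ explicitly as the non-vanishing locus on $\GL_n(k)$ of a single nonzero polynomial, from which both non-emptiness and Zariski-openness follow immediately.

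First I would fix a $K$-basis $(f_1,\ldots,f_t)$ of $I_{\le \delta}$ and, for each standard exterior monomial $x^{\alpha_1}\wedge \cdots \wedge x^{\alpha_t}$, denote by $P_\alpha \in K[a_{ij}]$ its coefficient in $\eta(f_1)\wedge \cdots \wedge \eta(f_t)$, viewed as a polynomial in the matrix entries of $\eta$. The maximality of $c\,w_1\wedge \cdots \wedge w_t$ forces the inequality $\val(P_\alpha(\eta)) + \sum_i w \cdot \alpha_i \ge \val(c) + \sum_i w \cdot w_i$ for every $\eta \in \GL_n(R)$ and every standard $\alpha$; otherwise $\eta(f_1)\wedge \cdots \wedge \eta(f_t)$ would contain a standard exterior term beating $c\,w_1\wedge \cdots \wedge w_t$ in our ordering. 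Using this inequality I would establish the key equivalence: $\eta \in U_I$ iff $\val(P_{(w_1,\ldots,w_t)}(\eta)) = \val(c)$.

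Next I would set $Q := \pi^{-\val(c)} P_{(w_1,\ldots,w_t)} \in K[a_{ij}]$. The inequality above, specialized to $\alpha = (w_1,\ldots,w_t)$, shows $Q(\eta) \in R$ for every $\eta \in \GL_n(R)$; since $k$ is infinite, $\GL_n(k)$ is Zariski dense in $k^{n\times n}$, and a standard pointwise-to-polynomial argument then forces $Q \in R[a_{ij}]$ with well-defined reduction $\bar Q \in k[a_{ij}]$. The equivalence translates into $\eta \in U_I$ iff $Q(\eta)\in R^\times$ iff $\bar Q(\bar \eta) \neq 0$. For the converse, any $\bar \eta \in \GL_n(k)$ with $\bar Q(\bar \eta) \neq 0$ admits an entrywise lift to some $\eta \in M_n(R)$, which lies in $\GL_n(R)$ because $\det \eta$ reduces to the nonzero $\det \bar \eta$. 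Thus $U_I \mod \pi = \{ \bar \eta \in \GL_n(k) \mid \bar Q(\bar \eta) \neq 0 \}$, a Zariski-open set; and non-emptiness of $U_I$ from Theorem~\ref{theo:tropical gin} forces $\bar Q$ to be nonzero, so this locus is non-empty.

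The step I expect to be the main obstacle is the ``if'' direction of the key equivalence: showing that $\val(P_{(w_1,\ldots,w_t)}(\eta)) = \val(c)$ alone makes the leading term of $\eta(f_1)\wedge \cdots \wedge \eta(f_t)$ be (a unit multiple of) $c\,w_1\wedge \cdots \wedge w_t$, rather than some competing $P_\alpha(\eta)\, x^{\alpha_1}\wedge \cdots \wedge x^{\alpha_t}$ of the same tropical weight but monomial-larger than $w_1\wedge \cdots \wedge w_t$. The resolution is that such a competitor would, again by maximality of $c\,w_1\wedge \cdots \wedge w_t$, be automatically excluded: its presence would produce a standard exterior term strictly larger than $c\,w_1\wedge \cdots \wedge w_t$, contradicting the very definition of the latter. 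This is the delicate place where both layers of the tropical ordering on standard exterior terms, the valuation-weight and the monomial-order tiebreaker, have to be used together.
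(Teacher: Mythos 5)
Your proposal is correct and takes essentially the same route as the paper: normalize the coefficient polynomial (the paper's $q$ from the proofs of Proposition~\ref{prop:3-1} and Theorem~\ref{theo:tropical gin}; your $P_{(w_1,\ldots,w_t)}$) by a power of $\pi$ to obtain an $R$-integral $Q$, and identify $U_I \bmod \pi$ with the non-vanishing locus of $\bar Q$ on $\GL_n(k)$. The paper compresses this to a one-line ``one can check''; you usefully spell out the maximality inequality, the integrality argument via Zariski density of $\GL_n(k)$, and the weight-then-monomial-tiebreaker analysis that the paper leaves implicit.
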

\begin{proof}
Let $q$ be the polynomial defining $U_{I_{\le \delta}}$
in the proof of Theorem \ref{theo:tropical gin}.
One can replace $q$ by some $q/\pi^l$ so that
$\overline{q}=q \mod \pi$ is non-zero, and one can check that consequently, since $k$ is infinite,
$U_I \mod \pi = \{ \overline{x} \in \GL_n(k): \: \overline{q}(\overline{x})  \neq 0 \}$
and this is a non-empty Zariski-open set of $GL_n(k)$. 
\end{proof}
%
%
%

\begin{rem} If, \textit{e.g.}, $R=\R \llbracket t \rrbracket$, and one takes $\eta \in GL_n(R)$
at random using a nonatomic distribution over $\R,$ then
$\eta$ belongs to $U_I$ with probability one.
\end{rem}

\subsection{Borel-fixedness}
\label{subsec:borel_fixedness}
In classical cases, a generic initial ideal is Borel-fixed ideal i.e. it is fixed under the action of the Borel subgroup $\mathcal{B}\subset \GL_n(K)$, which is the subgroup of all nonsingular upper triangular matrices. In tropical cases, a generic initial ideal is not always Borel-fixed. However, it can be Borel-fixed under some conditions. 

\begin{ex}
Let $I=(x^2,y^2)$ and $K=\mathbb{Q}_2$ (using $w=[0,0]$ and grevlex). Then in degree two, for a generic change of variables of $x^2 \wedge y^2$ by the matrix $\begin{bmatrix}
a & b \\ 
c & d
\end{bmatrix}$,    we get
in $K[x,y] \bigwedge K[x,y]$:
\[2(a^2bd-ab^2c) x^2 \wedge xy+(a^2 d^2-b^2c^2) x^2 \wedge y^2 +2(acd^2-bc^2d)xy \wedge y^2. \]

Hence the tropical GIN is $x^2 \wedge y^2$ for degree two, and is therefore not Borel-fixed, nor semi-stable for $y$.
\end{ex}

\begin{deftn}
    Let $\mathfrak{B}\subset \GL_n(O_K)$ be the subgroup generated by nonsingular upper triangular matrices whose diagonal entries have valuation $0.$ We call $\mathfrak{B}$ a Borel subgroup. We say that a monomial ideal $J$ is tropical Borel-fixed if $J$ is fixed under the action of  $\mathfrak{B}$. 
\end{deftn}

A direct adaptation of Theorem 4.2.1 and Prop. 4.2.4 of \cite{HH:2011} states that the usual properties of the GIN are preserved, under some conditions. 

\begin{prop}
   Let $d$ be the maximal total degree of the reduced GB of the tropical generic initial ideal of $I$. If  $K=\mathbb{Q}_p$ and  $p\ge d$, or if $\val(\mathbb{Z} \setminus \{0\})=\{ 0 \}$, then the tropical generic initial ideal of $I$ is tropical Borel-fixed and moreover, semi-stable for $x_n.$
\end{prop}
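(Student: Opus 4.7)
The plan is to adapt the classical Galligo-style argument of \cite[Thm.~4.2.1, Prop.~4.2.4]{HH:2011} to the tropical setting, using Theorem \ref{theo:tropical gin} in the role played classically by the openness of the Galligo stratum.

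First I would establish Borel-fixedness of the tropical gin $J$ of $I$. Fix $\eta \in \mathfrak{B}$. Since $U_I$ is open in $\GL_n(R)$ and left multiplication by $\eta^{-1}$ is a homeomorphism, the intersection $U_I \cap \eta^{-1} U_I$ is an open, non-empty subset of $\GL_n(R)$. Picking any $\tau$ there, both $\tau$ and $\eta\tau$ lie in $U_I$, so $\LM(\tau I) = \LM(\eta\tau I) = J$, and it then suffices to prove the identity $\LM(\eta \tau I) = \eta\cdot \LM(\tau I)$, which will yield $\eta J = J$.

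To prove this identity, for each degree $d$ I would set $V = (\tau I)_{\le d}$ and pick a basis $(f_1,\dots,f_t)$ of $V$ with pairwise distinct leading monomials $\LM(f_i)=w_i$. Expanding each $\eta(f_i)$, the coefficient of $w_i$ in $\eta(w_i)$ equals $\prod_{\ell} \eta_{\ell\ell}^{(w_i)_\ell}$, which has valuation $0$ since $\eta$ has diagonal entries of valuation $0$. The hypothesis that $p \ge d$ (resp.\ that $\val(\ZZ \setminus \{0\}) = \{0\}$) ensures that the integer multinomial coefficients arising in expanding the products $(\eta_{\ell\ell} x_\ell + \sum_{j<\ell} \eta_{j\ell} x_j)^{(w_i)_\ell}$ all have valuation $0$; hence no anomalous cancellation can make another monomial of the expansion strictly smaller than $w_i$ for the tropical order. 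Combining this computation with Lemmas \ref{lem:initial} and \ref{lem:basis} applied to $\eta(f_1)\wedge \cdots \wedge \eta(f_t)$ gives $\LM(\eta V) = \eta\cdot \LM(V)$, and taking the union over $d$ yields Borel-fixedness of $J$.

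Semi-stability for $x_n$ then follows by the standard argument of \cite[Prop.~4.2.4]{HH:2011}. Given $x^\alpha \in J$ with $x_n \mid x^\alpha$ and any $k<n$, I would consider the elementary unipotent $E = I_n + \lambda E_{k,n}$ with $\lambda \in R^\times$: it belongs to $\mathfrak{B}$, and expanding $E(x^\alpha)$ one finds $(x_k/x_n) x^\alpha$ with a coefficient of valuation $0$ (again thanks to the hypothesis on integer coefficients); the Borel-invariance of $J$ together with the fact that $J$ is a monomial ideal then forces $(x_k/x_n) x^\alpha \in J$. The main obstacle I anticipate is the key identity in the third paragraph: the delicate step is to verify that the hypothesis on $p$ (resp.\ on $\val$) is exactly strong enough to rule out the valuation-cancellations that could otherwise make an unintended monomial win the tropical comparison over $w_i$ in $\eta(f_i)$; once this is granted, the remainder of the proof follows mechanically from the machinery already developed in this section.
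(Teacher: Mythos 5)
The paper gives no proof of this proposition — it merely points to Theorem~4.2.1 and Proposition~4.2.4 of Herzog--Hibi and asserts that a direct adaptation works — so I am evaluating your argument on its own merits. Your overall strategy is the right one: reuse the wedge-product machinery of Lemmas~\ref{lem:initial}--\ref{lem:basis} and Proposition~\ref{prop:3-1}, and observe that the hypothesis $p \geq d$ (resp.\ $\val(\ZZ\setminus\{0\}) = \{0\}$) makes the multinomial coefficients appearing in $\eta(x^\alpha)$ into units, which is indeed the crucial input. However, two steps are not justified and genuinely need repair.

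First, the step ``$U_I \cap \eta^{-1}U_I$ is open and non-empty'' does not follow from openness alone. In the classical Galligo argument this is where one uses that $\GL_n$ is \emph{irreducible}, so that two non-empty Zariski-open subsets must meet. Here $U_I$ is only open in the $\pi$-adic topology on $\GL_n(R)$, which is totally disconnected; two non-empty open subsets of a compact $p$-adic group can be disjoint (e.g.\ $1 + p\Z_p$ and $2 + p\Z_p$ inside $\Z_p^\times$). The paper's Zariski-openness result for $U_I \bmod \pi$ requires $k$ infinite and is unavailable when $K = \Q_p$. The correct route, which also makes the rest of the argument cleaner, is to prove $\mathfrak{B}\,U_I \subseteq U_I$ directly: for $\tau \in U_I$ and $\eta \in \mathfrak{B}$, the coefficient of $w_1 \wedge\cdots\wedge w_t$ in $\eta\tau(f_1)\wedge\cdots\wedge\eta\tau(f_t)$ equals a unit times $c$ (this is exactly your valuation computation on diagonal entries and multinomial coefficients), and by the \emph{maximality} in Proposition~\ref{prop:3-1} the leading exterior term cannot exceed $cw_1\wedge\cdots\wedge w_t$, hence must equal it up to a unit. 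Then $U_I \cap \eta^{-1}U_I = U_I$ trivially.

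Second, the identity ``$\LM(\eta V) = \eta\cdot\LM(V)$'' and the justification ``no anomalous cancellation can make another monomial strictly smaller than $w_i$'' do not close the argument as stated. Showing that $w_i$ appears in $\eta(f_i)$ with a unit coefficient is not enough to conclude $\LT(\eta(f_i))$ is a unit times $w_i$: the other monomials $x^\gamma$ produced by $\eta$ can also carry unit coefficients, and for a general weight vector $w$ one can have $w\cdot\gamma < w\cdot\alpha_i$, making $x^\gamma$ \emph{tropically larger} than $w_i$. (Take $n=2$, $w = (-10,0)$, $\eta(x_2) = x_2 + x_1$: then $\LT(\eta(x_2)) = x_1 \neq x_2$.) Your computation rules out unwanted drops of valuation coming from multinomial coefficients, but not unwanted gains in weight coming from the change of variables. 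As with the first point, the missing ingredient is the maximality of $c\,w_1\wedge\cdots\wedge w_t$ over all of $\GL_n(R)$: once one knows that the $w_1\wedge\cdots\wedge w_t$ coefficient is preserved up to a unit and that no exterior term can exceed the maximum, the leading exterior term is pinned down and Lemma~\ref{lem:basis} gives $\LM(\eta\tau I_{\le d}) = \{w_1,\dots,w_t\}$. With that in hand, your final paragraph deducing semi-stability from Borel-fixedness via the unipotent $E = I_n + \lambda E_{k,n}$ is fine, again thanks to the same valuation observation.

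One smaller remark: the notation $\eta\cdot\LM(V)$ is ambiguous (applying $\eta$ to a monomial ideal does not yield a monomial ideal), and it is worth stating explicitly that the conclusion you want is $\eta(J) \subseteq J$, i.e.\ that every monomial appearing in $\eta(m)$ for $m$ a generator of $J$ lies again in $J$, which is precisely what your last paragraph uses.
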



\section{Tropical FGLM}

In this section, we investigate the second part of the FGLM strategy.
Namely, the multiplication matrices of $A/I$ have been 
computed using the algorithms of Section \ref{section:mult_mat}, and 
we can now perform operations in $A/I$ efficiently.

The strategy is then to go through projections in $A/I$
of monomials and find linear relations among them.
When done carefully, these relations provide polynomials
in $I$, whose leading terms for the new term order
 can be read on the monomials
defining the relation.
When processed in the right order, we can obtain from
these polynomials a minimal GB of $I$
 for our new term order.

\subsection{Tropical to classical}

We first begin with the easiest case of
starting from a tropical GB
and computing a classical GB.

It is clear that once the multiplication matrices are obtained,
we can directly apply the classical FGLM algorithm (namely Algorithm 4.1 of \cite{Faugere:1993}, 
see also Algorithm 8 of \cite{Huot:13}), or its $p$-adic stabilized version:
Algorithm 3 of \cite{RV:2016}.
This part is in $O(n \delta^3)$ arithmetic operations.
We refer to Prop 3.6 of \textit{loc. cit.} and 
obtain the following propositions.

\begin{prop}
The total complexity to compute a classical GB of $I$
starting from a tropical GB is in $O(n^3 \delta^3)$
arithmetic operations.
\end{prop}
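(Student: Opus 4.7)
The plan is simply to combine the two complexity bounds already in place. Computing the classical Gröbner basis from a tropical Gröbner basis decomposes naturally into two successive phases: first, build the multiplication matrices $M_1,\dots,M_n$ that encode the $K$-algebra structure of $A/I$ in the basis $B_\leq$; second, run an FGLM-style enumeration of monomials to discover linear dependencies in $A/I$ and read off a Gröbner basis for the target (classical) monomial ordering.

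For the first phase, I would invoke Algorithm \ref{algo:mult_mat} directly. Its correctness and arithmetic complexity have already been established: the multiplication matrices $M_1,\dots,M_n$ are obtained in $O(n^3 \delta^3)$ field operations over $K$. Crucially, after this phase we are in exactly the same setting as in the classical case, since the matrices $M_i$ describe the multiplication by $x_i$ in $A/I$ expressed in the basis $B_\leq$, and this description is independent of which term order was originally used to produce the input Gröbner basis.

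For the second phase, once the $M_i$ are available one can feed them directly to the classical FGLM algorithm (Algorithm 4.1 of \cite{Faugere:1993}, equivalently Algorithm 8 of \cite{Huot:13}), or to its $p$-adic stabilized counterpart (Algorithm 3 of \cite{RV:2016}) if one cares about precision. The complexity of this second phase is $O(n \delta^3)$ arithmetic operations, as recalled in Prop.~3.6 of \cite{RV:2016}.

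Summing the two contributions yields $O(n^3 \delta^3) + O(n \delta^3) = O(n^3 \delta^3)$ field operations, which gives the bound stated. There is essentially no obstacle here beyond checking that the second phase treats the $M_i$ as a black box and therefore does not care that they came from a tropical Gröbner basis; the cost of phase~1 dominates, so improvements in the tropical multiplication-matrix computation (for instance by exploiting the $1/n$ sparsity noted in the Remark) would propagate directly into the final bound.
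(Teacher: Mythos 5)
Your argument is exactly the paper's: compute the multiplication matrices via Algorithm~\ref{algo:mult_mat} in $O(n^3\delta^3)$ operations, then run classical (or $p$-adically stabilized) FGLM on those matrices in $O(n\delta^3)$ operations, and observe that the first phase dominates. This matches the paper's reasoning leading into the proposition, so the proposal is correct and takes the same route.
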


Behavior regarding to precision can be stated the following way.

\begin{prop}
Let $\leq_1$ be a tropical term ordering and $\leq_2$ be a monomial ordering.
Let $G$ be
  an approximate reduced tropical GB for $\leq_1$ of the ideal $I$, with
  coefficients known up to precision $O(\pi^N)$. 
  Let $\Xi$ be the smallest valuation of a coefficient in $G.$    
Let $B_{\leq_1}$ and $B_{\leq_2}$ be the canonical
  bases of $A/I$ for $\leq_1 $ and $\leq_2$. Let $M$ be the matrix whose
  columns are the $NF_{\leq_1} (x^\beta)$ for $x^\beta \in B_{\leq_2}$. Let $cond_{\leq_1, \leq_2}(I)$ 
  be the biggest valuation of an invariant
  factor  in the Smith Normal Form of $M$. 
  Recall that $D=1+\max_{x^\alpha \in B_\leq} \vert x^\alpha \vert $.
  
  Then if   $N>2cond_{\leq_1, \leq_2}(I)-\left( \frac{(n \delta)^{2D+2}-1}{(n \delta)^2-1} \right) \Xi$,
  we can chain Algorithm \ref{algo:mult_mat} and Algorithm 3 of \cite{RV:2016}
  to obtain an 
  approximate GB $G_2$ 
  of $I$ for $\leq_2$.
  The  coefficients of the polynomials of 
  $G_2$ are known up to precision
  $O\left( \pi^{N+\left( \frac{(n \delta)^{2D+2}-1}{(n \delta)^2-1} \right) \Xi-2cond_{\leq_1, \leq_2}(I)} \right)$.  
\end{prop}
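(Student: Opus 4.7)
The plan is to combine the two precision-tracking results already on the table: the bound given by the previous proposition for Algorithm \ref{algo:mult_mat}, and Proposition 3.6 of \cite{RV:2016}, which governs the loss of precision inside the stabilized classical FGLM (Algorithm 3 of \textit{loc.\ cit.}). The arithmetic complexity claim is already covered by the preceding proposition: Algorithm \ref{algo:mult_mat} costs $O(n^3 \delta^3)$ and the stabilized FGLM costs $O(n \delta^3)$, so the total is $O(n^3 \delta^3)$.

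First I would check that the hypotheses of the previous proposition are satisfied: $G$ is a reduced approximate tropical GB known at precision $O(\pi^N)$ with smallest coefficient valuation $\Xi \leq 0$. Running Algorithm \ref{algo:mult_mat} on $G$ therefore produces multiplication matrices $M_1, \dots, M_n$ known at precision $O\!\left( \pi^{N + \left( \frac{(n \delta)^{2D+2}-1}{(n \delta)^2-1} \right) \Xi} \right)$. Call this precision $N'$.

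Second, I would feed $M_1, \dots, M_n$ into Algorithm 3 of \cite{RV:2016}. Proposition 3.6 of \textit{loc.\ cit.}\ states that if the multiplication matrices are known at precision $O(\pi^{N'})$ and $N' > 2\,cond_{\leq_1, \leq_2}(I)$, then the output GB for $\leq_2$ is known at precision $O\!\left( \pi^{N' - 2\,cond_{\leq_1, \leq_2}(I)} \right)$. The assumption $N > 2\,cond_{\leq_1, \leq_2}(I) - \left( \frac{(n \delta)^{2D+2}-1}{(n \delta)^2-1} \right) \Xi$ is precisely the condition $N' > 2\,cond_{\leq_1, \leq_2}(I)$, so Algorithm 3 of \cite{RV:2016} can be safely applied. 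Substituting the value of $N'$ into the output precision yields the announced bound.

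The only subtle point, which I would verify before concluding, is that the matrix $M$ whose invariant factors define $cond_{\leq_1, \leq_2}(I)$ is exactly what the stabilized FGLM manipulates implicitly: its columns $NF_{\leq_1}(x^\beta)$ for $x^\beta \in B_{\leq_2}$ are assembled one by one through successive applications of the $M_i$ on $NF_{\leq_1}(1)$, so the invariant factors are intrinsic to $(I, \leq_1, \leq_2)$ and do not depend on the order in which monomials are enumerated. Once this is noted, the two estimates compose cleanly. The main obstacle in this proof is really bookkeeping rather than anything conceptual: the numerical contribution from Algorithm \ref{algo:mult_mat} enters linearly as a shift of the starting precision for the FGLM stage, while the FGLM contribution depends only on the geometric invariant $cond_{\leq_1, \leq_2}(I)$ of the change of basis.
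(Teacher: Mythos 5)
Your proof is correct and takes the same route the paper intends: the paper gives no explicit argument for this proposition beyond the remark ``We refer to Prop 3.6 of \textit{loc.\ cit.}'', so the intended proof is exactly your chaining of the preceding proposition on the precision loss of Algorithm~\ref{algo:mult_mat} with Proposition~3.6 of \cite{RV:2016}, verifying that the hypothesis $N>2\,cond_{\leq_1,\leq_2}(I)-\left(\frac{(n\delta)^{2D+2}-1}{(n\delta)^2-1}\right)\Xi$ is precisely what makes the intermediate precision $N'$ satisfy the requirement of the stabilized FGLM.
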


\subsection{Tropical to shape position}
\label{subsec:trop_to_shape}

We can apply any classical FGLM algorithm if $K$ is an exact field,
or a stabilized variant using Smith Normal Form, as in Algorithm 6 of \cite{RV:2016}.
We refer to Prop. 4.5 of \textit{loc. cit.}.
Complexity is very favorable when we have the combination
of Borel-fixedness and shape position.

\begin{prop}
If $I$ is in shape position and semi-stable for $x_n$, then 
we can combine Algorithm \ref{algo:M_n_when_semi_stable} with
Algorithm 6 of \cite{RV:2016}).
The time-complexity is in
  $O(n\delta^2)+O(\delta^3)$ arithmetic operations.
\end{prop}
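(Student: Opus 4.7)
The plan is to exploit the shape-position hypothesis so that Algorithm 6 of \cite{RV:2016} needs only $M_n$ (not the full collection $M_1,\dots,M_n$) together with the vectors $NF_\leq(x_i)$. First I would invoke Algorithm \ref{algo:M_n_when_semi_stable}, which by the previous proposition produces $M_n$ in $O(\delta^2)$ arithmetic operations (all opposites), and then apply the preceding lemma to compute all $NF_\leq(x_i)$ for $i=1,\dots,n-1$ in $O(n\delta)$ additional opposites. No other multiplication matrix $M_j$ is ever formed.

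Next I would run the FGLM iteration on the single generator $x_n$: build the sequence of vectors $1,\, x_n,\, x_n^2,\,\dots$ expressed in the basis $B_\leq$ by successive left-multiplication by $M_n$, each step costing $O(\delta^2)$, while maintaining incrementally the row-reduction of the $\delta\times\delta$ matrix whose columns are these vectors. Since $I$ is in shape position, the minimal polynomial of $x_n$ has degree exactly $\delta$, so this phase terminates in $\delta$ steps for a total cost of $O(\delta^3)$. The element $x_n^\delta - h_n(x_n)$ of the shape-position basis is read off from the first linear dependency.

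Finally, to produce the remaining $n-1$ polynomials $x_i - h_i(x_n)$, I would express each already-known $NF_\leq(x_i)$ as a linear combination of $1, x_n, \dots, x_n^{\delta-1}$. This amounts to $n-1$ linear systems sharing the same $\delta\times\delta$ coefficient matrix, whose LU-type factorization is obtained as a byproduct of the row-reduction carried out in the previous phase at no extra cost. Each of the $n-1$ back-substitutions then costs $O(\delta^2)$, for a total of $O(n\delta^2)$. Summing the three phases yields the claimed $O(n\delta^2)+O(\delta^3)$ bound.

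The main point to verify, rather than an obstacle, is that at no stage do we secretly need $M_j$ for $j<n$: the only products used are $v\mapsto M_n v$ (inside the FGLM loop), and the only other multiplicative information about $x_1,\dots,x_{n-1}$ that enters is their individual normal forms, which the preceding lemma delivers cheaply. Because Algorithm 6 of \cite{RV:2016} is tailored to precisely this shape-position setting, its correctness and $p$-adic precision guarantees carry over without modification once $M_n$ and the $NF_\leq(x_i)$ are supplied as inputs.
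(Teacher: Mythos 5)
Your proposal is correct and matches the decomposition the paper intends but leaves implicit: the paper states this proposition without proof, deferring the $M_n$ cost to the preceding proposition ($O(\delta^2)$), the $NF_\leq(x_i)$ cost to the preceding lemma ($O(n\delta)$), and the FGLM-loop analysis to Prop.\ 4.5 of \cite{RV:2016}. You have simply spelled out those three ingredients — the $O(\delta^3)$ Krylov-plus-reduction phase on $M_n$ and the $O(n\delta^2)$ back-substitutions for the $x_i - h_i(x_n)$ — and correctly noted that shape position is what ensures only $M_n$ and the $NF_\leq(x_i)$ vectors are ever needed, so the argument is essentially the paper's.
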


\begin{prop}
Let $G_1$ be an approximate reduced GB of $I$, with
coefficients known at precision $O(\pi^N)$.
Let $\Xi$ be the smallest valuation of a coefficient in $G_1.$    
If $\leq_2$ is lex, and if we assume that the ideal
  $I$ is in shape position and $LM_{\leq_1}(I)$ is semi-stable for $x_n$,
  then the adapted FGLM in Algorithm 6 of \cite{RV:2016}), computes an
  approximate GB $G_2$ of $I$ for lex, in shape position. The
  coefficients of the polynomials of $G_2$ are known up to precision
  $O(\pi^{N-2cond_{\leq_1, \leq_2}+\delta \Xi})$. 
  Moreover, we can read on $M$ whether the precision was enough
  or not, and hence prove after the computation that the result
  is indeed an approximate GB.
\end{prop}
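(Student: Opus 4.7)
The plan is to chain Algorithm \ref{algo:M_n_when_semi_stable} with Algorithm 6 of \cite{RV:2016}, invoking the two preceding precision and complexity results almost verbatim and only tracking precision through the interface between the two. By the semi-stability hypothesis on $\LM_{\leq_1}(I)$, Proposition \ref{prop:Huot} guarantees that $M_n$ is obtained from $G_1$ without any arithmetic beyond opposites, so its entries are known at precision $O(\pi^N)$ with valuations lower-bounded by $\Xi$. The lemma preceding this proposition gives the $NF_{\leq_1}(x_i)$ for $i<n$ at precision $O(\pi^N)$ as well. Since $I$ is in shape position, the canonical $\leq_2$-basis is exactly $\{1,x_n,\dots,x_n^{\delta-1}\}$, so these are precisely the data that Algorithm 6 of \cite{RV:2016} consumes: the multiplication matrix by the last variable, together with the normal forms of the other variables used to reconstruct the $x_i-h_i(x_n)$ components of the shape-position basis.

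Second, I would analyse the precision through the call to Algorithm 6 of \cite{RV:2016}. That algorithm builds the transition matrix $M$ whose columns are $NF_{\leq_1}(x_n^k)$ for $k=0,\dots,\delta-1$, obtained iteratively via $v^{(k)} = M_n \cdot v^{(k-1)}$, and then performs an SNF-based stabilized linear solve. The key point is to bound the precision loss over the $\delta$ matrix-vector products: since the entries of $M_n$ have valuation $\geq \Xi$ with $\Xi \leq 0$ (the leading coefficients of the reduced $G_1$ being units), a straightforward induction---analogous to the proof of the precision statement for Algorithm \ref{algo:mult_mat}---shows that each product drops precision by at most $|\Xi|$, for a cumulative drop of $\delta|\Xi|=-\delta\Xi$. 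Feeding this into Proposition 4.5 of \cite{RV:2016} adds the usual $-2\,\mathrm{cond}_{\leq_1,\leq_2}$ term coming from the stabilized solve, yielding the announced output precision.

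Finally, the \emph{a posteriori} certification follows directly: the valuations of the invariant factors of $M$---and hence $\mathrm{cond}_{\leq_1,\leq_2}(I)$---are revealed by the SNF during the computation, so one compares them against the input precision $N$ and the estimated loss $-\delta\Xi$, and accepts the output exactly when the balance is positive. The main obstacle is the bookkeeping in the matrix-vector induction of Step 2: one must verify that the loss per iteration is genuinely bounded by $|\Xi|$ and does not accumulate multiplicatively in the size of $M_n$. This holds because each $v^{(k)}$ receives only one multiplication by $M_n$, with no interleaved elimination step, so the losses telescope additively rather than geometrically as they would in a full row-echelon computation.
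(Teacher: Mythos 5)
The paper states this proposition without a written proof, pointing instead to Prop.\ 4.5 of \cite{RV:2016} and the preceding semi-stability results; your argument makes that implicit chain explicit and follows exactly the intended route. Your bookkeeping on the iterated matrix-vector products is sound --- each application of $M_n$ drops precision by at most $\lvert\Xi\rvert$, since both cross terms in the error have valuation at least $N+(k-1)\Xi$ --- so the cumulative loss of $\delta\lvert\Xi\rvert$ together with the $2\,\cond_{\leq_1,\leq_2}$ loss from the stabilized SNF solve yields the announced bound and the a posteriori certificate.
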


\subsection{Tropical (or classical) to tropical}

We conclude our series of algorithms with a new algorithm to compute a tropical GB
of $I$ of dimension $0$ knowing  the multiplication matrices
of $A/I.$

In the classical case, the vanilla FGLM algorithm
goes through the monomials $x^\alpha$ in ascending order
for $\leq_2$, test whether $x^\alpha$ is in the 
vector space generated (in $A/I$) by the monomials $x^\beta$ such that
$x^\beta <_{2} x^\alpha$, and if so,
produce a polynomial in the GB in construction
from the relation obtained by this linear relation.

In the tropical case, because of the fact that
coefficients have to be taken into account,
a relation (in $A/I$) between $x^\alpha$ and some monomials
$x^\beta$ such that
$x^\beta <_2 x^\alpha$
is not enough to ensure that $x^\alpha \in \LT_{\leq_2}(I).$
We deal with this issue by \begin{enumerate*}
\item taking all monomials of a given degree at the same time, in a big Macaulay matrix, and
\item reducing them with a special column-reduction algorithm so as to preserve the leading terms.
\end{enumerate*}

The linear algebra algorithm is presented in Algorithm \ref{algo:column_echelon_for_fglm},
with the general tropical FGLM algorithm in Algorithm \ref{algo:trop_fglm}.

\begin{algorithm} 
  \SetKwInOut{Input}{input}\SetKwInOut{Output}{output}
  \Input{$M_1,\dots,M_n$ the multiplication matrices of $A/I$, in a basis $B_{\leq_1}$ for a tropical term ordering $\leq_1$, a tropical term ordering $\leq_2$.}
  \Output{A GB $G$ of the ideal $I$ for $\leq_2$.}

  $L \leftarrow \{1\}$, $G \leftarrow \emptyset$, $d \leftarrow 1$ \;
  $M \leftarrow $ the matrix with $\delta$ rows and $0$ columns \;
  $P \leftarrow $ the matrix with $0$ rows and $0$ columns \;  
  \While{$L \neq \emptyset$}{
    Stack on the right of $M$ all the monomials in $L$ of degree $d$, written in the basis $B_{\leq_1}$ using the multiplication matrices \;
    Remove those monomials from $L$ \;
    Apply Algorithm \ref{algo:column_echelon_for_fglm} with $M$ and $\leq_2$, to get a new $M$ and update the pivoting matrix $P$  \; \tcc{If $M_0$ is the matrix of the $NF_{\leq_1}(x^\alpha)$ for $x^\alpha$ indexing the columns of $M$, then $M=M_0 P.$}
    For all the new columns indexed by $x^\alpha$ that reduced to zero, add to $G$ the polynomial $x^\alpha-\sum_{\gamma \neq \alpha} P_{\gamma, \alpha} x^\gamma$, and remove the multiples of $x^\alpha$ from $L$ \;
    Add to $L$ the $x_i x^\alpha$ for all $i$ and for all $x^\alpha$ new column in $M$ that did not reduce to zero, and remove the duplicates \;
    $d \leftarrow d+1$ \;
  }
  \textbf{Return $G$}
  
  \caption{A tropical FGLM algorithm} \label{algo:trop_fglm}
\end{algorithm}

\begin{algorithm} 
  \SetKwInOut{Input}{input}\SetKwInOut{Output}{output}
  \Input{$M$ a $\delta \times l$ matrix over $K$, whose rows and columns are indexed by monomials. A tropical term ordering $\leq$. An invertible $s \times s$ matrix $P.$ }
  \Output{A column-reduction of $M$ compatible with $\leq$, an updated $P.$ }

  \textbf{if} $M =0$ \textbf{then}	Return $M,P$ \;
  Find the coefficient $M[i,j]$ of row indexed by $x^\beta$ and column indexed by $x^\alpha$ such that $M[i,j]^{-1} x^\alpha$ is smallest, and using smallest $x^\beta$ to break ties \;
  Use this non-zero coefficient to eliminate the other coefficients on the same row  \;
  Update $P$ accordingly \;
  Proceed recursively on the remaining rows and columns \;
  
  \textbf{Return $M,P$}
  
  \caption{Column reduction for FGLM} \label{algo:column_echelon_for_fglm}
\end{algorithm}

The fact that Algorithm \ref{algo:column_echelon_for_fglm} computes a
column-echelon form of the matrix (up to column-swapping) along with the pivoting matrix is clear.
What is left to prove is the compatibility of the pivoting
process with the computation of the normal forms and the leading terms
according to $\leq_2.$
It relies on the following loop-invariant.

\begin{prop}
At any point during the execution of Algorithm \ref{algo:column_echelon_for_fglm}, for any $x^\alpha$, the column of $M$
indexed by $x^\alpha$ corresponds to the normal form  $NF_{\leq_1}(H)$ (with respect to $\leq_1$) of some polynomial $H$ with $LT_{\leq_2}(H)=x^\alpha$. \label{prop:loop_inv_column_red_for_fglm}
\end{prop}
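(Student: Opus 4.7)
The plan is to prove the invariant by induction on the sequence of column-operations performed by Algorithm~\ref{algo:column_echelon_for_fglm}, interleaved with the column-stacking steps of Algorithm~\ref{algo:trop_fglm}. To each column~$k$ indexed by a monomial $x^{\alpha_k}$ I attach a witness polynomial $H_k \in A$, and I maintain the stronger invariant that the column stores the coordinates (in the basis $B_{\leq_1}$) of $NF_{\leq_1}(H_k)$ and that $\LT_{\leq_2}(H_k) = x^{\alpha_k}$ with leading coefficient $1$.

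For the base case, whenever a new column indexed by $x^\alpha$ is stacked in Algorithm~\ref{algo:trop_fglm}, it is built using the multiplication matrices to contain exactly $NF_{\leq_1}(x^\alpha)$; the choice $H := x^\alpha$ then gives both parts of the invariant for free. For the inductive step, suppose the invariant holds before a pivoting operation. The algorithm selects $(i,j)$ so as to minimize $M[i,j]^{-1} x^{\alpha_j}$ for $\leq_2$, then for every remaining column~$k \neq j$ with $M[i,k] \neq 0$ performs
\[
\text{col}_k \;\longleftarrow\; \text{col}_k \;-\; \frac{M[i,k]}{M[i,j]}\,\text{col}_j .
\]
By $K$-linearity of $NF_{\leq_1}$, the updated column~$k$ represents $NF_{\leq_1}\!\left(H_k - (M[i,k]/M[i,j])\,H_j\right)$, and the only thing to check is that this new polynomial still has leading term $x^{\alpha_k}$ for $\leq_2$.

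The key ingredient is the following inequality. From the pivot rule, $M[i,j]^{-1} x^{\alpha_j} \leq_2 M[i,k]^{-1} x^{\alpha_k}$. Since two terms can be $=_{\leq_2}$-equal only when their underlying monomials coincide, and distinct columns always carry distinct monomial indices by construction in Algorithm~\ref{algo:trop_fglm}, the inequality is strict. Multiplying both sides by the scalar $M[i,k]$, an operation that preserves $<_2$ because it shifts the valuation criterion by the same amount on each side and leaves the total-degree and $\leq_m$ criteria untouched, yields
\[
\frac{M[i,k]}{M[i,j]}\,x^{\alpha_j} \;<_2\; x^{\alpha_k} .
\]
Every non-leading term of $H_j$ is strictly $<_2$-below $x^{\alpha_j}$, so after scaling every term of $(M[i,k]/M[i,j])\,H_j$ stays strictly $<_2$-below $x^{\alpha_k}$. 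Subtracting it from $H_k$ therefore neither cancels the monomial $x^{\alpha_k}$ nor introduces a term that is $\geq_2 x^{\alpha_k}$, so the invariant is preserved.

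The main obstacle is exactly this compatibility between the pivot-selection criterion and the tropical order: it is crucial that minimizing $M[i,j]^{-1} x^{\alpha_j}$ (rather than, say, a purely valuation-based or purely monomial-based criterion) is what turns the inequality into $(M[i,k]/M[i,j])\,x^{\alpha_j} <_2 x^{\alpha_k}$ after scalar multiplication. Once this is established, the rest of the argument reduces to the routine observations that $NF_{\leq_1}$ is $K$-linear, that column indices remain distinct throughout the algorithm, and that no column-stacking step affects the witnesses of previously-present columns.
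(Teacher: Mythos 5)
Your proof is correct and takes essentially the same route as the paper's: induction on the pivoting steps, linearity of $NF_{\leq_1}$, the pivot-selection inequality $M[i,j]^{-1}x^{\alpha_j} <_2 M[i,k]^{-1}x^{\alpha_k}$ (strict because column labels are distinct), and scalar-multiplication invariance of $<_2$ to conclude that the subtracted multiple of $H_j$ lies strictly $<_2$-below $x^{\alpha_k}$. You merely spell out a couple of steps the paper treats as immediate (why scaling preserves $<_2$, and the fate of the non-leading terms of $H_j$), but the decomposition and the key inequality are identical.
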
 \vspace{-.4cm}
\begin{proof}
It is true by construction for any column when entering Algorithm \ref{algo:column_echelon_for_fglm}.
Also by construction, all columns are labelled by distinct monomials.
Now let us assume that on Line 4, we are eliminating a coefficient $d$ on the column labelled 
by $x^\beta$ using a coefficient $c$ on the column labelled by $x^\alpha$ as pivot.
Because of the choice of pivot on Line 3, we get that $c^{-1} x^\alpha <_2 d^{-1} x^\beta.$
Let us assume that the column indexed by $x^\alpha$ corresponds to $NF_{\leq_1}(H)$ with $LT_{\leq_2}(H)=x^\alpha$,
and the column indexed by $x^\beta$ corresponds to $NF_{\leq_1}(Q)$ with $LT_{\leq_2}(Q)=x^\beta$. Please note that $x^\alpha \neq x^\beta.$
Then after pivoting the second column corresponds to $NF_{\leq_1}(Q-dc^{-1}H).$
As $LT_{\leq_2}(dc^{-1}H)=dc^{-1} x^\alpha <_2 x^\beta$, the loop-invariant is then preserved, which is enough to conclude the proof.
\end{proof}

\begin{theo}
Algorithm \ref{algo:trop_fglm} terminates and is correct: its output is a GB of the ideal $I$ for $\leq_2$.
It requires $O(n \delta^3)$ arithmetic operations.
\end{theo}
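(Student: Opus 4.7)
The plan is to verify termination, correctness, and the complexity bound in turn, leaning heavily on the loop invariant of Proposition \ref{prop:loop_inv_column_red_for_fglm}.

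For termination, I would argue as follows. Each column of $M$ that survives column reduction (i.e.\ does not reduce to zero) corresponds by the invariant to $NF_{\leq_1}(H)$ for some $H \notin I$; after the pivoting these surviving columns carry pairwise distinct labels and are $K$-linearly independent modulo $I$. Since $\dim_K A/I = \delta$, at most $\delta$ columns can ever survive. Once these $\delta$ columns are in place, any further monomial popped from $L$ must reduce to zero, adding its label to $\LT_{\leq_2}(G)$. Because the ideal generated by $\LT_{\leq_2}(G)$ grows inside a Noetherian ring, only finitely many new labels ever arise; multiples of existing labels are pruned from $L$, so $L$ is emptied in finitely many iterations.

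For correctness, I would first observe that whenever a column labelled by $x^\alpha$ reduces to zero, Proposition \ref{prop:loop_inv_column_red_for_fglm} provides an $H \in I$ with $\LT_{\leq_2}(H) = x^\alpha$; this $H$ is exactly the polynomial the algorithm puts into $G$, so $G \subset I$. To prove that $\LT_{\leq_2}(G)$ generates $\LT_{\leq_2}(I)$, I would induct on degree, using that tropical term orders are degree-compatible. Assume that every $x^\gamma \in \LT_{\leq_2}(I)$ of degree $<d$ is divisible by some element of $\LT_{\leq_2}(G)$, and let $x^\gamma$ be a minimal element of $\LT_{\leq_2}(I)$ of degree $d$. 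Minimality implies that every proper divisor of $x^\gamma$ lies in $B_{\leq_2}$; in particular $x^\gamma = x_i x^{\gamma'}$ with $x^{\gamma'}$ a surviving column. The BFS-style update then places $x^\gamma$ in $L$, it is processed in iteration $d$, and it cannot be a new surviving column (that would yield $\delta+1$ independent elements of $A/I$), so it must reduce to zero and join $\LT_{\leq_2}(G)$.

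For the complexity, the surviving columns enumerate $B_{\leq_2}$ (cardinality $\delta$) and the columns reducing to zero enumerate the minimal generators of $\LT_{\leq_2}(I)$, which lie in the border and are bounded by $n\delta$; so at most $O(n\delta)$ columns are ever processed. Each newly introduced column is obtained from an existing one by a single matrix--vector product $M_i v$ in $O(\delta^2)$ operations, and its column-reduction against the already reduced columns costs another $O(\delta^2)$ (at most $\delta$ pivots, each touching $O(\delta)$ entries). Summing over $O(n\delta)$ columns yields $O(n\delta^3)$ arithmetic operations.

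The main obstacle is the induction step of correctness: it requires combining three ingredients — degree-compatibility of $\leq_2$, the completeness of the border enumeration performed through $L$, and the leading-term preservation guaranteed by Proposition \ref{prop:loop_inv_column_red_for_fglm} — to ensure that no minimal generator of $\LT_{\leq_2}(I)$ is ever missed. The degree-by-degree processing (rather than a strict $\leq_2$-increasing enumeration) is what makes the enumeration schedule compatible with the valuation-aware column reduction.
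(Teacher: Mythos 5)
Your proposal is correct and follows essentially the same route as the paper's proof: both rest on the loop invariant of Proposition~\ref{prop:loop_inv_column_red_for_fglm}, degree-compatibility of $\leq_2$, a dimension/rank argument to show no minimal generator of $\LT_{\leq_2}(I)$ is missed, the $O(n\delta)$ bound on the number of columns, and $O(\delta^2)$ work per column. Your version fleshes out the ``by a dimension argument, no minimal generator is missing'' step into an explicit induction on degree and separates the matrix--vector and reduction costs, but these are refinements of the same argument rather than a different approach.
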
 
\begin{proof}
We use the following loop-invariant: after Line 9 is executed, $LT_{\leq_2}(G)$ contains all the minimal generators
in $LT_{\leq_2}(I)$ of degree $\leq d$, they each correspond to a reduced-to-zero column of $M$, and the $x^\beta$
corresponding to non-reduced-to-zero columns of $M$ are all in $NS_{\leq_2}(I).$ The proof for this invariant is
as follows.
As $\leq_2$ is degree-compatible, it is clear by linear algebra that
$rank(M)=\dim (A_{\leq d} / I_{\leq d}).$
Thanks to Proposition \ref{prop:loop_inv_column_red_for_fglm}, the polynomials added to $G$
are in $I$, and more precisely, $f=x^\alpha-\sum_{\gamma} P_{\gamma, \alpha} x^\gamma$ as in Line 8
is a polynomial such that $\LT_{\leq_2}(f)=x^\alpha$ and $NF_{\leq_1}(f)=0$, as given in the Proposition.
Their $LT_{\leq_2}$'s are minimal generators of $LT_{\leq_2}(I)$ by construction (all multiples of previous generators have
been erased).
By a dimension argument, no minimal generator is missing.

Once $d$ is big enough for all minimal generators of $LT_{\leq_2}(I)$ to have been produced,
no monomials can be left in $L$ and the algorithm terminates.
Termination and correctness are then clear.

As columns are labelled by some $x_i x^\alpha$ with $x^\alpha \in NS_{\leq_2}(I)$
then at most $n \delta$ columns are produced in the algorithm.
As the rank of $M$ is $\delta$ and so is also its number of rows,
the column-reduction of a given column costs $O(\delta^2)$ arithmetic operations.
Consequently, the total cost of the algorithm is in $O(n\delta^3)$ arithmetic operations.
\end{proof}

\begin{rem}
The previous algorithm remarkably bears the
same asymptotic complexity as the vanilla classical FGLM algorithm ($O(n \delta^3)$ arithmetic operations), regardless of the more involved
linear algebra part.
Could fast linear algebra also be applied here?
\end{rem}


\begin{ex}
Let $(x+\frac{1}{2}y,y^2+1)$ be a GB of the ideal it spans,
for $w=[0,-1]$ and grevlex. We compute a GB of the same ideal
for $w=[0,0]$ and grevlex. The following matrices are: the polynomials added to $M$ (in three batches, by degree),
the final state of $M$ and the final $P.$
In the end, we get $(y+2x,x^2+\frac{1}{4})$ as the output GB.

\hspace{-.5cm}
\begin{tikzpicture}
\matrix (m) [matrix of math nodes,nodes in empty cells, ampersand replacement=\& ] at (-1.3,0) {\&  1 \& x \& y \&  x^2   \\ 
 1  \& 1 \& \phantom{-2^{-1}}\& \&-2^{-2} \\
y  \&\phantom{1} \&-2^{-1} \& 1 \&\phantom{ -2^{-2}} \\
} ;
\draw (m-2-2.north west) to[ right] (m-3-2.south west);
\draw (m-2-3.north west) to[ right] (m-3-3.south west);
\draw (m-2-5.north west) to[ right] (m-3-5.south west);
\draw (m-2-5.north east) to[ left] (m-3-5.south east);

\matrix (n) [matrix of math nodes,nodes in empty cells, ampersand replacement=\& ] at (1.7,0) {\&  1 \& x \& y \&  x^2   \\ 
 1  \& 1 \& \&0 \&0 \\
y  \&\phantom{1} \&-2^{-1} \& 0 \&0 \\
} ;
\draw (n-2-2.north west) to[ right] (n-3-2.south west);

\draw (n-2-5.north east) to[ left] (n-3-5.south east);

\matrix (o) [matrix of math nodes,nodes in empty cells, ampersand replacement=\& ] at (4.7,0) {  1 \&  \&  \& 2^{-2} \\ 
    \& 1 \& 2\& \\
    \&  \& 1\& \\
 \phantom{1}   \&  \& \& 1 \phantom{ {}^{-2}}\\
} ;
\draw (o-1-1.north west) to[ right] (o-4-1.south west);
\draw (o-1-4.north east) to[ left] (o-4-4.south east);

\draw (.2,-.5) node{,} ;
\draw (3,-.5) node{,};
\draw (3.5,-.2) node{$P=$} ;
\end{tikzpicture}
\end{ex}
\vspace{-.5cm}
\section{Numerical data}
\label{sec:num_data}

A toy implementation of our algorithms 
in \sage \cite{Sage} 
is available
on \url{https://gist.github.com/TristanVaccon}.
The following arrays gather some numerical results. The timings are expressed in seconds of CPU time.\footnote{Everything was performed on a Ubuntu 16.04
with 2 processors of 2.6GHz and 16 GB of RAM.} 

\subsection{Tropical to classical}

For a given $p,$ we take three polynomials with random coefficients in $\mathbb{Z}_p$
(using the Haar measure)
in $\mathbb{Q}_p[x,y,z]$ of degrees $2 \leq d_1 \leq d_2 \leq d_3 \leq 4.$ $D=d_1+d_2+d_3-2$ is the Macaulay bound.
We first compute a tropical GB
for the weight $w=[0,0,0]$ and the grevlex monomial
ordering, and then apply Algorithms \ref{algo:mult_mat} and \ref{algo:trop_fglm}
to obtain a lex GB. We compare with the strategy of computing a classical
grevlex GB and then applying FGLM to obtain a lex GB.
For any given choice of $d_i$'s, the experiment is repeated 50 times.
Coefficients of the initial polynomials are given at high-enough
precision $O(p^N)$ for no precision issue
to appear (see \cite{RV:2016} for more on FGLM at finite precision).

Coefficients of the output tropical GB or classical GB
are known at individual precision $O(p^{N-m})$ (for some $m \in \mathbb{Z})$).
We compute the total mean and max on those $m$'s on the obtained GB.
In the first following array, we provide the mean and max for the tropical
strategy.
In the second, to compare classical and tropical, we provide couples for the mean on the $50$ ratios of timing per execution ($t$),
along with the arithmetic ($\Sigma$) and geometric ($\pi$) mean of the $50$ ratios of mean loss in
 precision per execution. Data for $p=101$ or $65519$ are not worth for these ratios
as the loss in precision is $0$ most of the time.

In average the tropical strategy takes longer, but save a large amount of precision (for small $p$).
While the ratio of saved precision may decrease with the degree,
the abolute amount of saved precision is often  still very large.
We have also noted that the standard deviations for these ratios can be very large.

\begin{scriptsize}
\hspace{-.5cm}
\begin{tabular}{|l|C{.21cm}|C{.21cm}|C{.21cm}|C{.21cm}|C{.21cm}|C{.21cm}|C{.21cm}|C{.21cm}|C{.21cm}|C{.21cm}|C{.21cm}|C{.21cm}|C{.21cm}|C{.21cm}|}
\hline 
precision (trop.) & \multicolumn{2}{|c|}{ $D=4$} & \multicolumn{2}{|c|}{5} & \multicolumn{2}{|c|}{6}
 & \multicolumn{2}{|c|}{7} & \multicolumn{2}{|c|}{8} & \multicolumn{2}{|c|}{9}  \\ 
\hline 
$p=2$ & 11 &103 &25&278 & 60&509 &176&1253 & 300&1783 & 652&3929 \\ 
\hline 
3 &  3&21  & 12&97  & 36&396  &125&634  &  141&1002 &282 &2876 \\ 
\hline 
101 &  0&1 & 0&1 & 1 & 79  &  0&2  &  15&408  &  0&2   \\ 
\hline
65519 &  0&0  & 0&0 & 0&0 & 0&0 & 0&0  &  0&0   \\ 
\hline  
\end{tabular}

\hspace{-.5cm}
\begin{tabular}{|l|C{.05cm}|C{.05cm}|C{.05cm}|C{.05cm}|C{.05cm}|C{.05cm}|C{.05cm}|C{.05cm}|C{.05cm}|C{.05cm}|C{.05cm}|C{.05cm}|C{.05cm}|C{.05cm}|C{.05cm}|C{.05cm}|C{.05cm}|C{.05cm}|}
\hline 
\multirow{2}{*}{$\frac{\textrm{trop.}}{\textrm{classical}}$} & \multicolumn{3}{|c|}{ $D=4$} & \multicolumn{3}{|c|}{5} & \multicolumn{3}{|c|}{6}
 & \multicolumn{3}{|c|}{7} & \multicolumn{3}{|c|}{8} & \multicolumn{3}{|c|}{9}  \\ 
\cline{2-19} 
 & $t \vphantom{t^2}$ & $\Sigma$ & $\pi$  & $t$ & $\Sigma$ & $\pi$& $t$ & $\Sigma$ & $\pi$& $t$ & $\Sigma$ & $\pi$&  $t$ & $\Sigma$ & $\pi$ &$t$ & $\Sigma$ & $\pi$ \\
\hline 
$p=2$ & 20 & .4 &.3 & 5 & .4 &.2 & 5 & .5 & .2& 5 &  .6 & .2&  1.5 & .8 &.2 & 9 & 1 &.2 \\
\hline 
3 & 6&.6 &.2 & 6&.5 &.2 &5&.5 &.2 & 2&.4 &.1  & 1.2&.7 &.1  &  .9&.9 &.1  \\ 
\hline 
\end{tabular}
\end{scriptsize}

\vspace{-.3cm}
\subsection{Tropical to tropical}

We repeat the same experiments for mean and max loss in precision, but this time
we compute a tropical GB for weight $w=[0,0,0]$
and then use Algorithm \ref{algo:trop_fglm} to compute a
tropical GB for weight $w=[-2,4,-8]$ (grevlex
for tie-breaks in both cases).  
Precision-wise, it seems that there is an intrinsic difficulty
in computing a lex GB compared to a tropical GB.


\begin{footnotesize}
\hspace{-.5cm}
\begin{tabular}{|l|C{.15cm}|C{.15cm}|C{.15cm}|C{.15cm}|C{.15cm}|C{.15cm}|C{.15cm}|C{.15cm}|C{.15cm}|C{.15cm}|C{.15cm}|C{.15cm}|C{.15cm}|C{.15cm}|}
\hline 
precision loss & \multicolumn{2}{|c|}{ $D=4$} & \multicolumn{2}{|c|}{5} & \multicolumn{2}{|c|}{6}
 & \multicolumn{2}{|c|}{7} & \multicolumn{2}{|c|}{8} & \multicolumn{2}{|c|}{9}  \\ 
\hline 
$p=2$ &  2&18 &2.5&14 & 2.6&14 &2.9&16 & 3&17 & 3.5&19 \\ 
\hline 
3 &  1&9  & 1&7  & 1&9  &1.4 &14&1.4 &  11&2 &13 \\ 
\hline 
101 &  0&1 & 0&1 &  0& 1  &  0&2  &  0&2 &  0&2   \\ 
\hline
65519 &  0&0  & 0&0 & 0&0  & 0&0 & 0&0  &  0&0   \\ 
\hline  
\end{tabular}
\end{footnotesize}


\vspace{-.3cm}
\subsection{Semi-stability and shape position}

We adapt our setting to $\mathbb{Q} ((t))$,
using entries with coefficients in $\mathbb{Z} \llbracket t \rrbracket$ given at precision 50
(using \sage 's built-in random function), 
and apply the ideas of Subsection \ref{subsec:using_semi_stability} and Section \ref{sec:gin}.
As $\mathbb{Q}$ is involved, computations are slow for $D \geq 7$ due
to coefficients growth.

\begin{footnotesize}

\hspace{-.5cm}
\begin{tabular}{|l|c|c|c|c|c|c|}
\hline 
$w=[0,0,0]+$grevlex & \multicolumn{2}{|c|}{ $D=4$} & \multicolumn{2}{|c|}{5} & \multicolumn{2}{|c|}{6}
\\ 
\hline 
mean timing (F5 $\&$ FGLM) &  2.8&9.4 &3.9&102 & 10&1030  \\ 
\hline 
precision F5 (mean $\&$ max) &  0&2  & 0&2  & 0&3  \\ 
\hline 
precision FGLM (mean $\&$ max)& 0 &  0&0.1 & 8&0.4 &  34 \\ 
\hline
\end{tabular}
\end{footnotesize}

\begin{small}
\bibliographystyle{plain}

\end{small}

\end{document}